\title{Fixed-Price Approximations in Bilateral Trade}
\author{Zi Yang Kang, Francisco Pernice, Jan Vondrák}
\date{\today}
\newcolumntype{L}{>{$}l<{$}}
\newcolumntype{R}{>{$}R<{$}}
\definecolor{shadecolor}{gray}{0.9}
\begin{document}

\newtheorem{thm}{Theorem}
\newtheorem{lem}[thm]{Lemma}
\newtheorem{rem}[thm]{Remark}
\newtheorem{prop}[thm]{Proposition}
\newtheorem{cor}[thm]{Corollary}
\newtheorem{ex}[thm]{Example}
\newtheorem{defn}[thm]{Definition}
\newtheorem{problem}[thm]{Problem}

\newcommand{\norm}[1]{\left\|#1\right\|}
\newcommand{\OO}{\mathcal{O}}
\newcommand{\NN}{\mathcal{N}}
\newcommand{\E}{\mathbb{E}}
\newcommand{\infsum}{\sum_{n=1}^\infty}
\newcommand{\infinfsum}{\sum_{n=-\infty}^\infty}
\newcommand{\interior}{\text{int}\,}
\newcommand{\infsumi}{\sum_{i=1}^\infty}
\newcommand{\infsumj}{\sum_{j=1}^\infty}
\newcommand{\infsumk}{\sum_{k=1}^\infty}
\newcommand{\sumin}{\sum_{i=1}^n}
\newcommand{\sumiN}{\sum_{i=1}^N}
\newcommand{\sumkj}{\sum_{k=1}^j}
\newcommand{\sumkN}{\sum_{k=1}^N}
\newcommand{\sumkM}{\sum_{k=1}^M}
\newcommand{\1}{{\bf 1}}
\newcommand{\0}{{\bf 0}}
\newcommand{\rank}{{\bf rank}\;}
\newcommand{\ceil}[1]{\left \lceil{#1}\right \rceil }
\newcommand{\floor}[1]{\left \lfloor{#1}\right \rfloor }
\newcommand{\red}{\textcolor{red}}
\newcommand{\mini}{\text{minimize}\quad}
\newcommand{\maxi}{\text{maximize}\quad}
\newcommand{\Leq}{\preceq}
\newcommand{\Geq}{\succeq}
\newcommand{\Le}{\prec}
\newcommand{\Ge}{\succ}
\newcommand{\subjto}{\text{subject to}\quad}
\newcommand{\minst}[2]{\mini \;\,& #1 \\ \subjto &  #2}
\newcommand{\maxst}[2]{\maxi \;\,& #1 \\ \subjto &  #2}
\newcommand{\sumik}{\sum_{i=1}^k}
\newcommand{\tr}{\textbf{tr}\;}
\newcommand{\diag}{\textbf{diag}}
\newcommand{\sumim}{\sum_{i=1}^m}
\newcommand{\infsumm}{\sum_{m=1}^\infty}
\newcommand{\pr}{\mathbb{P}}
\newcommand{\sumjn}{\sum_{j=1}^n}
\newcommand{\sumnk}{\sum_{n=1}^k}
\newcommand{\sumjm}{\sum_{j=1}^m}
\newcommand{\inflim}{\lim_{n \rightarrow \infty}}
\newcommand{\lub}{\text{lub} \;}
\newcommand{\N}{\mathbb{N}}
\newcommand{\Z}{\mathbb{Z}}
\newcommand{\R}{\mathbb{R}}
\newcommand{\F}{\mathbb{F}}
\newcommand{\SSS}{\mathbb{S}}
\newcommand{\LL}{\mathcal{L}}
\newcommand{\ran}{\text{Ran}\,}
\newcommand{\Q}{\mathbb{Q}}
\newcommand{\I}{\mathcal{I}}
\newcommand{\st}{\;|\;}
\newcommand{\St}{\;\Bigg|\;}
\newcommand{\U}{\mathscr{U}}
\newcommand{\lt}{\langle}
\newcommand{\gt}{\rangle}
\newcommand{\Cant}{\mathcal{C}}
\newcommand{\FF}{\mathcal{F}}
\newcommand{\C}{\mathbb{C}}
\newcommand{\CC}{\mathcal{C}}
\newcommand{\RR}{\mathscr{R}}
\newcommand{\eps}{\varepsilon}
\newcommand{\vspan}[1]{\text{span}(#1)}
\newcommand{\intI}{\int_0^1}
\newcommand{\xn}[1][]{\{#1 x_n\}_{n=1}^\infty}
\newcommand{\DD}{\mathcal{D}}
\newcommand{\rightcomment}{\hfill \small \normalfont}
\newcommand{\yn}{\{y_k\}_{n=1}^\infty}
\newcommand{\pmed}{{p_{\text{median}}}}
\newcommand{\GammaFB}{\Gamma^{\text{FB}}}
\newcommand{\OPT}{\mathsf{OPT\text{-}W}}
\newcommand{\OPTGFT}{\mathsf{OPT\text{-}GFT}}
\newcommand{\OPTW}{\mathsf{OPT\text{-}W}}
\newcommand{\dd}{\mathrm{d}}
\newcommand{\g}{\gamma}

\newcommand{\GFT}{\mathsf{GFT}}
\newcommand{\LOSS}{\mathsf{LOSS}}
\newcommand{\calO}{\mathcal{O}}
\newcommand{\W}{\mathsf{W}}
\newcommand{\eqdist}{\stackrel{\mathcal{D}}{=}}
\newcommand{\AND}{\quad\text{and}\quad}
\newcommand{\e}{\eps}
\renewcommand{\(}{\left(}
\renewcommand{\)}{\right)}
\newcommand{\pp}{\partial}

\def\arraystretch{1.3}

\providecommand{\keywords}[1]
{
  \small	
  \textbf{\textit{Keywords---}} #1
}

\allowdisplaybreaks

\begin{titlingpage}

\maketitle

\begin{abstract}
We consider the bilateral trade problem, in which two agents trade a single indivisible item. It is known that the only dominant-strategy truthful mechanism is the fixed-price mechanism: given commonly known distributions of the buyer's value $B$ and the seller's value $S$, a price $p$ is offered to both agents and trade occurs if $S \leq p \leq B$. The objective is to maximize either expected welfare $\mathbb{E}[S + (B-S) \mathbf{1}_{S \leq p \leq B}]$ or expected gains from trade $\mathbb{E}[(B-S) \mathbf{1}_{S \leq p \leq B}]$.
     
We improve the approximation ratios for several welfare maximization variants of this problem. When the agents' distributions are identical, we show that the optimal approximation ratio for welfare is $\frac{2+\sqrt{2}}{4}$. With just one prior sample from the common distribution, we show that a $3/4$-approximation to welfare is achievable. When agents' distributions are not required to be identical, we show that a previously best-known $(1-1/e)$-approximation can be strictly improved, but $1-1/e$ is optimal if only the seller's distribution is known. 
\end{abstract}
    
\end{titlingpage}

\section{Introduction}

In this paper, we study the performance of fixed-price mechanisms in the canonical context of bilateral trade, in which a buyer and seller bargain over a single indivisible good. As is standard in the literature, agents have independent private values for the good, $B$ for the buyer and $S$ for the seller, and the mechanism designer is given some information about the distributions from which these values are drawn. We consider two basic settings in this paper: one where the distributions of the agents' values are identical (the ``{symmetric}'' case), and one where the distributions are arbitrary (the ``{general}'' or ``{asymmetric}'' case). 
We measure the performance of a mechanism by considering either (i) the ``gains from trade'': $B-S$ whenever trade occurs and $0$ otherwise, or (ii) the ``welfare'': $B$ if trade occurs and $S$ otherwise. Observe that the difference between welfare and gains from trade is simply the seller's value. 

The ``first-best'' optimum is considered to be $\max \{B,S\}$ for welfare, and $\max \{B-S, 0\}$ for gains from trade. In their seminal work, Myerson and Satterthwaite \cite{Myerson} showed that no budget-balanced, individually rational Bayesian incentive-compatible (BIC) mechanism attains the first-best optimum in general. They also presented a ``second-best'' mechanism which is budget-balanced, individually rational, BIC, and achieves the best possible gains from trade attainable by any such mechanism. However, the second-best mechanism is complicated and difficult to implement in practical settings, so a recent growing literature seeks simpler mechanisms that maintain provably near-optimal approximation guarantees. 

Motivated by the need for strategic simplicity, we focus on dominant-strategy incentive compatible (DSIC) mechanisms. A fundamental result of Hagerty and Rogerson \cite{hagerty} is that essentially every DSIC mechanism for bilateral trade has the following form: the mechanism designer chooses a (possibly random) price and offers trade at this price to the buyer and seller. Trade occurs if and only if the buyer's value exceeds the chosen price and the seller's value does not; this mechanism is DSIC. We refer to this as a {\em fixed-price} or \emph{posted-price} mechanism. Given this explicit characterization of the allowed mechanisms under our incentive compatibility criterion, other questions arise. How close to the efficient outcome can fixed-price mechanisms get? What information about the agents is required to set the optimal price? In this paper we give a comprehensive answer to these questions in the special case of symmetric bilateral trade, and partial answers in the general case.

\subsection{Prior Work}
Since the seminal work of Myerson and Satterthwaite \cite{Myerson}, a large literature has emerged that characterizes the optimal incentive-compatible mechanisms in various settings. When the strong budget balance condition is imposed, Hagerty and Rogerson \cite{hagerty} showed that dominant-strategy mechanisms for bilateral trade must essentially be fixed-price mechanisms, which we study in this paper for their strategic simplicity. Drexl and Kleiner \cite{drexlkleiner15} and Shao and Zhou \cite{shaozhou16} show that fixed-price mechanisms can be optimal even when the budget balance condition is relaxed to a no-deficit condition.

The two papers most closely related to ours are \cite{Dobzinski-Blumrosen} and \cite{efficientMarketsWithLimitedInformation}. In \cite{Dobzinski-Blumrosen}, Dobzinski and Blumrosen show that there is a randomized choice of a price based on the seller's distribution which provides a $(1-1/e)$-approximation to the optimal welfare (i.e., the expected welfare provided by the mechanism is at least $1-1/e$ times the optimal welfare). This approximation improves the bounds shown in previous work by \cite{blumrosendobzinski14} and \cite{colini-baldeschietal16}.  On the other hand, Dütting et al \cite{efficientMarketsWithLimitedInformation} show that one can get a $1/2$-approximation of welfare by just posting a sample from the seller distribution as a price (also in the asymmetric setting). Approximations to gains from trade have also been extensively studied in the literature \cite{best-of-both-worlds, approximating-gft-in-bilateral-trading, strongly-budget-balanced, McAfee, FixedPriceApproximability, Kang}. In particular, McAfee \cite{McAfee} showed that in the case of identical buyer's and seller's distributions, setting the price $p$ to be the {\em median} of the distribution achieves a $1/2$-approximation to optimal gains from trade. More recently, Kang and Vondrák \cite{Kang} showed that choosing the {\em mean} also achieves a $1/2$-approximation to optimal gains from trade, and this choice is in fact best possible among all fixed prices for a given distribution.   Even in more general two-sided markets (with more than two agents), Colini-Baldeschi et al \cite{approx-comb-auctions} and Brustle et al \cite{Brustle2017ApproximatingGF} have shown constant approximations of welfare and gains from trade, respectively. Along these lines, Cai et al \cite{Cai2021OnMG} have shown approximations even when various constraints on the transactions are enforced.

There have been various other results that show how first-best efficiency can be approximated when agents' values are drawn from restricted families of distributions (i.e., distributions satisfying certain regularity conditions), such as those due to Arnosti, Beck and Milgrom \cite{arnostietal16}, and Blumrosen and Mizrahi \cite{approximating-gft-in-bilateral-trading}. In this spirit, \cite{shaozhou16} demonstrates the optimality of the mean as a fixed-price mechanism in a bilateral trade setting for log-concave distributions that have an increasing hazard rate.\footnote{Another difference between the result in \cite{shaozhou16} and our Proposition~\ref{prop:price} is that they do not restrict attention to posted-price mechanisms {\em ex ante} as we do, but rather consider dominant-strategy  mechanisms with a no-deficit condition (as opposed to strong budget balance).} In contrast, our objective is to consider the most general class of distributions possible. As such, our results do not depend on distributional assumptions; all we require is that the distribution of agents' values has finite and nonzero mean.

\subsection{Our Results}

We extend the state of the art for bilateral trade in the following ways:
\begin{enumerate}
    \item We prove that posting a single sample of the common distribution as a price attains a $3/4$-approximation to the optimal welfare in the symmetric case (compared to a $1/2$-approximation by using a single sample in the asymmetric setting \cite{efficientMarketsWithLimitedInformation}). $3/4$ is also tight for this mechanism. 
    \item We show that in the symmetric setting, a fixed-price mechanism using the mean as a price attains a $\frac{2+\sqrt{2}}{4}$-approximation of the optimal welfare, which is the best possible among fixed-price mechanisms.
    \item For general (asymmetric) bilateral trade, with arbitrary agents' distributions, we prove that there is a $(1-1/e+\eps)$-approximation to optimal welfare for some constant $\eps>0$, improving the previously best-known $(1-1/e)$-approximation \cite{Dobzinski-Blumrosen}.
    \item We also show that the $1-1/e$-approximation cannot be improved with the knowledge of the seller's distribution only (which is what the mechanism of \cite{Dobzinski-Blumrosen} uses).
\end{enumerate}

\begin{table}[H]
\footnotesize
\centering
\begin{tabular}{|c|cc|cc|}
\hline
\textbf{Variant} & \multicolumn{2}{c|}{\textbf{Welfare approximation}} & \multicolumn{2}{c|}{\textbf{Gains from trade approximation}} \\ \hline
Symmetric, full knowledge  & {\color{magenta}$(2+\sqrt{2})/4$}* & & $1/2$*& \cite{Kang, McAfee} \\
\hline
Symmetric, 1 prior sample & {\color{magenta}$3/4$} & &  {  $1/2$}*
  & \cite{1/2-gft-one-sample}  \\
\hline
Asymmetric, full knowledge  & {\color{magenta} $1 - 1/e + \eps$} & & $0$*& \cite{Dobzinski-Blumrosen}\\
\hline
Asymmetric, 1 prior sample & $1/2$ & \cite{efficientMarketsWithLimitedInformation}&$0$*&  \cite{Dobzinski-Blumrosen} \\               
\hline
\end{tabular}
    \caption{A survey of fixed-price mechanism approximations for bilateral trade. Our contributions are highlighted in {\color{magenta}magenta}. Provably optimal results (within their variant) are marked with a *.
    \label{tab:survey}}
\end{table}

{\em Note:} A preliminary version of this paper incorrectly claimed a 1/2-approximation of gains from trade with one sample in the symmetric setting as a new result. Since then, we have been informed of a work by Babaioff, Goldner and Gonczarowski \cite{1/2-gft-one-sample} which contains a proof of this result. 
Since our proof is somewhat different, we include it in the appendix.

\section{Model}
Here we formally outline the model that was discussed above.

\paragraph{Bilateral Trade.} The bilateral trade problem is simple: two agents, a buyer and a seller, value an item that is held by the seller at $B$ and $S$ respectively. The specific realizations of $B$ and $S$ are unknown to us, but we assume that we have $B\sim F_B$ and $S\sim F_S$, with $B$ and $S$ independent, for distribution functions $F_S$ and $F_B$, about which we are given some information. 

\paragraph{Bilateral Trade Mechanisms.} A bilateral trade mechanism consists of an allocation function $A: \R\times\R\to \{0,1\}$, which takes in as input the reported valuations of the buyer and seller and outputs 1 if a transaction should occur or 0 otherwise, and a payment function $\Pi: \R\times \R \to \R$ which, if a trade occurs, determines the price at which the item should be transacted. 
By the revelation principle \cite{revelationPrinciple}, 
we are interested in \emph{incentive-compatible} mechanisms $(A,\Pi)$, where the agents are incentivized to report their true values. There are two standard notions of incentive compatibility that are considered in the literature: Bayesian incentive compatibility (BIC), and dominant-strategy incentive compatibility (DSIC). The former roughly says that reporting true values should be an optimal strategy for each agent, \emph{in expectation} over the possible behavior of the other agent. The latter notion, which is what we will consider in this paper, means that reporting true values is \emph{always} an optimal strategy for all agents, regardless of what the other agent does. 
Hence the notion that agents have incentive to act according to their true preferences is fully described by the DSIC property of a mechanism, which is formally
defined as
$$
\begin{cases}
\Pi(s,b) - s\cdot A(s,b) \geq \Pi(s',b) - s\cdot A(s',b), \\
b\cdot A(s,b) - \Pi(s,b) \geq  b\cdot A(s,b') -  \Pi(s',b')
\end{cases}\quad\text{for all }s,s',b,b' \in \R.
$$
That is, both agents are always better off reporting their true preferences. It was shown by \cite{hagerty} that such mechanisms are essentially \emph{fixed-price} mechanisms, where the payment function $\Pi$ is taken to be a (possibly random) single value $p$, only depending on the distribution from which the valuations come from (and not on the valuations themselves). Then the allocation function is given by $A(b,s) = \1_{s\leq p\leq b}$: trade occurs if and only if the seller values the item less than $p$, and the buyer values it more.

\paragraph{Gains From Trade and Welfare.} As discussed above, we consider two benchmarks in this paper: gains from trade and welfare. The optimal gains from trade, given a distribution function $F$ from which the valuations are drawn, is defined as
$$
\OPTGFT(F) = \E [\1_{B \geq S}(B-S)],
$$
while the gains from trade achieved by a particular, non-random posted price $p$ is given by
$$
\GFT(p,F) = \E [\1_{B \geq p \geq S}(B-S)].
$$
Similarly, the corresponding welfare measures are defined as
\begin{align*}
    \OPTW(F) &= \E[S] + \E[\1_{B \geq S}(B-S)] = \E[S] + \OPTGFT(F), \\
    \W(p,F) &= \E[S] + \E[\1_{B\geq p \geq S}(B-S)] = \E[S] + \GFT(p,F).
\end{align*}
It is often useful to have simplified integral formulas for these quantities. We now state two such formulas, both initially derived by McAfee \cite{McAfee}; for completeness, we offer derivations for them in the appendix.
\begin{align*}
    \OPTGFT(F) &=   \int_0^\infty F(x)(1-F(x))\ \dd x, \\
    \GFT(p, F) &= F(p)\int_p^\infty (1-F(x))\ \dd x + (1-F(p))\int_0^p F(x)\ \dd x.
\end{align*}

\section{Main Results}
We organize the results into three subsections. In the first subsection, we focus on the setting where only one sample from $F$ is given, and show a $3/4$-approximation of welfare.
In the second subsection, we work in the setting of full access to the common distribution $F,$ and show that posting the mean of $F$ as a price gives a $(2+\sqrt{2})/4$-approximation of welfare, which is the best-possible in the symmetric model. Finally, in the third subsection we return to the asymmetric setting, where we now have different distributions $F_S$ and $F_B$ for the buyer and seller, respectively. In this setting we give a $(1+1/e+0.0001)$-approximation of welfare, showing that the mechanism of \cite{blumrosendobzinski14} is not optimal; however, we show that $1-1/e$ is indeed optimal if one uses only the seller distribution to set the price.

\subsection{Single-Sample Approximation of Welfare}

As mentioned before, Babaioff, Goldner and Gonczarowski \cite{1/2-gft-one-sample} have shown that posting a single sample as a price in the symmetric setting achieves \emph{exactly} a 1/2-approximation of the expected optimal gains from trade, under the assumption that the distribution is atomless. This matches the best-possible achievable ratio for this model even if full information about the distribution is given. We offer a somewhat different proof of this fact in the appendix (Theorem \ref{GFT-1/2}). 

By similar arguments, we obtain a 3/4 approximation of optimal welfare. Recall the representations of the welfare measures from the previous section:
\begin{align*}
    \OPTW (F) &=  \mu + \OPTGFT(F), \\
    \W(F, p) &=\mu + \GFT(F,p),
\end{align*}
where $\mu := \E[S] = \E[B].$ We then get the following result.
\begin{thm}\label{welfare-3/4}
The symmetric bilateral trade mechanism which under a valuation distribution $F$ posts a price $p\sim F$ achieves a 3/4-approximation of the optimal welfare.
\end{thm}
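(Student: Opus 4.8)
The plan is to reduce the claim to the gains-from-trade statement already proved. Writing $\mu := \E[S] = \E[B]$, recall from Section~2 that $\OPTW(F) = \mu + \OPTGFT(F)$ and $\W(p,F) = \mu + \GFT(p,F)$. Taking expectations over $p\sim F$ and invoking Theorem~\ref{GFT-1/2},
\[
\E_{p\sim F}[\W(p,F)] = \mu + \E_{p\sim F}[\GFT(p,F)] = \mu + \tfrac12\,\OPTGFT(F).
\]
So it suffices to verify $\mu + \tfrac12\,\OPTGFT(F) \ge \tfrac34\bigl(\mu + \OPTGFT(F)\bigr)$, which after rearranging is equivalent to the single inequality $\OPTGFT(F)\le\mu$.

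Next I would establish $\OPTGFT(F)\le\mu$ directly from the closed form derived in Section~2: since $0\le F(x)\le 1$ for all $x$,
\[
\OPTGFT(F) = \int_0^\infty F(x)\bigl(1-F(x)\bigr)\,\dd x \;\le\; \int_0^\infty \bigl(1-F(x)\bigr)\,\dd x = \E[B] = \mu .
\]
Combining the two displays gives $\E_{p\sim F}[\W(p,F)] \ge \tfrac34\,\OPTW(F)$, which is the desired approximation guarantee.

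There is essentially no obstacle here beyond the observation that the welfare ratio is worst when the ``baseline'' term $\E[S]$ is negligible compared to the gains from trade; the bound $\OPTGFT(F)\le\mu$ makes this precise and is immediate from $F\le 1$. If one also wishes to record that $3/4$ is tight for this mechanism, I would exhibit a family of distributions placing mass $1-\eps$ at $0$ and mass $\eps$ at a large value $N$ (smoothed slightly so that a density exists): there $\mu = \eps N$ and $\OPTGFT(F) = (1-\eps)\eps N$, so $\E_{p\sim F}[\W(p,F)]/\OPTW(F) = (\tfrac32 - \tfrac{\eps}{2})/(2-\eps) \to \tfrac34$ as $\eps\to 0$, matching the bound.
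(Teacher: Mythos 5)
Your proof is correct, and the overall reduction is the same as the paper's: both arguments write $\E_{p\sim F}[\W(p,F)]/\OPTW(F) = 1 - \tfrac12\,\OPTGFT(F)/(\mu+\OPTGFT(F))$ via Theorem~\ref{GFT-1/2} and then show that the fraction is at most $1/2$, i.e.\ that $\OPTGFT(F)\le\mu$. Where you differ is in how that last inequality is established. The paper rewrites $\mu+\OPTGFT(F)$ as $\lim_{a\to\infty}\bigl(aF(a)-\int_0^a F^2\bigr)$ and invokes the Cauchy--Schwarz bound $\int_0^a F \le a^{1/2}\bigl(\int_0^a F^2\bigr)^{1/2}$, followed by a limit computation. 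You instead observe that $F(x)\bigl(1-F(x)\bigr)\le 1-F(x)$ pointwise, so $\OPTGFT(F)=\int_0^\infty F(1-F)\le\int_0^\infty(1-F)=\mu$ by the tail formula for the mean of a nonnegative random variable. Your route is shorter and avoids the limit bookkeeping entirely; it also makes transparent why the bound is tight exactly when $F$ is close to $1$ on the region carrying the gains from trade, which is precisely what your two-point tightness family (and the paper's family $F_r(x)=\min\{x^r,1\}$ with $r\to0$) exploits. The only point worth flagging is the one you already note: the closed forms from Section~2 are derived assuming a density, so the two-point example should indeed be smoothed, or one can simply use the paper's $F_r$ family, which has a density on $(0,1)$.
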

\begin{proof}
Using that $\E_{p\sim F}[\GFT(F,p)] / \OPTGFT(F) =1/2,$ we can write
\begin{align*}
    \frac{\E_{p\sim F}[\W(F, p)]}{\OPTW(F)} &= \frac{\mu + \E_{p\sim F}[\GFT(F,p)] }{\mu + \OPTGFT(F)} \\
    &= 1 -
    \frac{1}{2}\frac{\OPTGFT(F)}{\mu + \OPTGFT(F)} \\
    &= 1 -
   \lim_{a\to\infty} \frac{1}{2}\frac{\int_0^a F(x)(1-F(x))\ \dd x}{\int_0^a xf(x)\ \dd x + \int_0^a F(x)(1-F(x))\ \dd x} \\
    &= 1 -
   \lim_{a\to\infty} \frac{1}{2}\frac{\int_0^a F(x)\ \dd x- \int_0^a F(x)^2\ \dd x}{ aF(a) - \int_0^a F(x)^2\ \dd x }.
\end{align*}
Note that we have $\int_0^a F(x)\ \dd x \leq a^{1/2}(\int_0^a F(x)^2\ \dd x)^{1/2}$ by the Cauchy--Schwarz inequality. Thus letting $t:=(\int_0^a F(x)^2\ \dd x)^{1/2},$ and noting that $0\leq t\leq a^{1/2}$ and $F(a)\leq 1,$ we get
\begin{align*}
    \frac{\int_0^a F(x)\ \dd x- \int_0^a F(x)^2\ \dd x}{ aF(a) - \int_0^a F(x)^2\ \dd x } &\leq \frac{a^{1/2}t - t^2}{aF(a) - t^2} \\
    &\leq\frac{1}{F(a)}\frac{t(a^{1/2} - t)}{(a^{1/2} + t)(a^{1/2} - t)} \\
    &=\frac{1}{F(a)}\frac{t}{a^{1/2}+t}.\\
    &\leq \frac{1}{F(a)}\frac{1}{2}\\
    &\xrightarrow{a\to\infty}1/2.
\end{align*}
Thus we have $\frac{\E_{p\sim F}[\W(F, p)]}{\OPTW(F)} \geq 1-\frac{1}{2}\cdot \frac{1}{2} = 3/4,$ as desired.
\end{proof}

The approximation lower bound of 3/4 from Theorem \ref{welfare-3/4} for our mechanism is tight, as we show next.
\begin{lem}
The 3/4-approximation bound of Theorem \ref{welfare-3/4} is tight for the mechanism using a sample $p \sim F$ as a price.
\end{lem}
\begin{proof}
Take the distribution functions $F_r(x) = \min\{x^r, 1\}$ parameterized by $r\in \R_+.$ From the proof of Theorem \ref{welfare-3/4}, we get
\begin{align*}
    \frac{\E_{p\sim F}[\W(F_r, p)]}{\OPTW(F_r)}  &=  1 - \frac{1}{2}\frac{\int_0^1 x^r(1-x^r)dx}{r\int_0^1 x^r dx + \int_0^1 x^r(1-x^r)dx} \\
    &=  1 - \frac{1}{2}\frac{\frac{1}{r+1} - \frac{1}{2r+1}}{\frac{r}{r+1}+ \frac{1}{r+1} - \frac{1}{2r+1}} = 1 - \frac12 \cdot \frac{r}{2r^2 + 2r}\\
    &\xrightarrow{r\to 0} 1- \frac{1}{2}\cdot \frac{1}{2} = \frac{3}{4}.
\end{align*}
\end{proof}

\subsection{Best-Possible Approximation of Welfare}
We begin by showing that the optimal posted price in the symmetric setting is given simply by the mean of the common distribution.

\begin{prop}\label{prop:price}
The welfare-maximizing mechanism for symmetric bilateral trade is given by posting the mean of the common distribution $F$:
\[p^*=\E[S]=\E[B].\]
\end{prop}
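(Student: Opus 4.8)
The plan is to reduce the problem to maximizing the gains-from-trade function over deterministic prices, and then to pin down the maximizer by a short monotonicity argument. First, since $\W(p,F)=\mu+\GFT(p,F)$ with $\mu:=\E[S]$ a constant independent of $p$, and since the welfare of any (possibly randomized) price is just the $\mu$-shift of the average of $\GFT(\cdot,F)$ over the corresponding price distribution, it suffices to show that $p^*=\mu$ maximizes $p\mapsto\GFT(p,F)$ over $p\ge 0$; the optimal mechanism is then this single deterministic posted price.

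Next I would use the closed-form expression $\GFT(p,F)=F(p)\int_p^\infty(1-F(x))\,\dd x+(1-F(p))\int_0^p F(x)\,\dd x$ established earlier, together with the elementary identity
$$\int_p^\infty (1-F(x))\,\dd x-\int_0^p F(x)\,\dd x=\mu-p,$$
which holds because $\int_0^p\big[(1-F(x))+F(x)\big]\,\dd x=p$ and $\int_0^\infty(1-F(x))\,\dd x=\mu$. Substituting, this collapses to $\GFT(p,F)=\int_0^p F(x)\,\dd x+F(p)\,(\mu-p)$, and in particular $\GFT(\mu,F)=\int_0^\mu F(x)\,\dd x$.

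From here I would finish by a direct comparison that needs no smoothness assumption on $F$: for $p<\mu$,
$$\GFT(\mu,F)-\GFT(p,F)=\int_p^\mu F(x)\,\dd x-F(p)\,(\mu-p)\ge 0$$
because $F$ is nondecreasing, and for $p>\mu$ the same difference equals $F(p)\,(p-\mu)-\int_\mu^p F(x)\,\dd x\ge 0$ for the same reason; hence $p^*=\mu$ is optimal. (If $F$ has a density $f$, one can instead differentiate to get $\tfrac{\dd}{\dd p}\GFT(p,F)=f(p)\,(\mu-p)$, which is $\ge 0$ for $p\le\mu$ and $\le 0$ for $p\ge\mu$, giving the same conclusion.) I do not expect a real obstacle: the only points needing a word of care are the reduction from randomized to deterministic prices and, for full generality, the fact that $F$ need not be absolutely continuous — both handled above. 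The crux is simply the one-line identity $\int_p^\infty(1-F)-\int_0^p F=\mu-p$.
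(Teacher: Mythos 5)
Your proof is correct and follows essentially the same route as the paper: both reduce welfare to the expression $\W(p,F)=\mu+F(p)(\mu-p)+\int_0^p F(x)\,\dd x$ and then conclude by comparing $p$ with $\mu$ using only the monotonicity of $F$. Your identity $\int_p^\infty(1-F)-\int_0^p F=\mu-p$ is just a repackaging of the paper's symmetry step $\E[B\cdot\1_{B>p}]=\E[S]-\E[S\cdot\1_{S\le p}]$, and your explicit reduction from randomized to deterministic prices is a welcome (if minor) addition.
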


\begin{proof}
For any positive price $p>0$, we may rewrite the welfare function as
\begin{align*}
\W(p,F)
&=\E[S]+\E[(B-S)\cdot\1_{B \geq p\geq p}]\\
&=\E[S]+\E[B\cdot\1_{B \geq p}]\cdot F(p)-\E[S\cdot\1_{S\leq p}]\cdot\left[1-F(p)\right].
\end{align*}
Because both agents' values are drawn from the same distribution, 
\[\E[B\cdot\1_{B \geq p}]=\E[S\cdot\1_{S \geq p}]=\E[S]-\E[S\cdot\1_{S\leq p}].\]
Thus:
\begin{align*}
\W(p,F)
&=\E[S]+\E[B\cdot\1_{B \geq p}]\cdot F(p)-\E[S\cdot\1_{S\leq p}]\cdot\left[1-F(p)\right]\\
&=\E[S]\cdot\left[1+F(p)\right]-\E[S\cdot\1_{S\leq p}]\\
&=\E[S]\cdot\left[1+F(p)\right]-p\cdot F(p)+\int_0^p F(s)\ \dd s.
\end{align*}
We thus obtain that for $\delta \in [0,p^*]$,
\[
\W(p^*;F)-\W(p^*-\delta;F)=\left[F(p^*)-F(p^*-\delta)\right]\cdot\left(\E[S]-p^*\right)-\delta\cdot F(p^*-\delta)+\int_{p^*-\delta}^{p^*}F(x)\ \dd x,
\]
and for $\delta \in [0,\infty)$,
\[
\W(p^*;F)-\W(p^*+\delta;F)=\left[F(p^*)-F(p^*+\delta)\right]\cdot\left(\E[S]-p^*\right)+\delta\cdot F(p^*+\delta)-\int_{p^*}^{p^*+\delta}F(x)\ \dd x.
\]
In the first expression, since $F$ is non-decreasing,
\[\int_{p^*-\delta}^{p^*}F(x)\ \dd x\geq \int_{p^*-\delta}^{p^*}F(p^*-\delta)\ \dd x= \delta\cdot F(p^*-\delta).\]
Likewise, in the second expression, the same observation yields
\[-\int_{p^*}^{p^*+\delta} F(x)\ \dd x\geq -\int_{p^*}^{p^*+\delta} F(p^*+\delta)\ \dd x=-\delta\cdot F(p^*+\delta).\]
Therefore, with $p^*=\E[S]=\E[B]$, substitution yields
\[\begin{dcases}
\W(p^*;F)-\W(p^*-\delta;F)&\geq \left[F(p^*)-F(p^*-\delta)\right]\cdot\left(\E[S]-p^*\right)=0,\\
\W(p^*;F)-\W(p^*+\delta;F)&\geq \left[F(p^*)-F(p^*+\delta)\right]\cdot\left(\E[S]-p^*\right)=0.
\end{dcases}\]
Therefore the optimal fixed-price mechanism is given by $p^*=\E[S]$.
\end{proof}


Proposition~\ref{prop:price} shows that only the first moment of the agents' distribution is required by the mechanism designer to determine the optimal fixed price; all higher moments are inconsequential. Therefore, precise knowledge of the agents' distribution, other than the mean of the distribution, is irrelevant.

In general, the fixed-price mechanism $p^*=\E[S]$ is not uniquely optimal. Suppose, for example, that each agent's value is 0 with probability $1/2$ and 1 with probability $1/2$. Then any price $p\in(0,1)$ is optimal. Nonetheless, under weak regularity conditions on $F$, the mean is the uniquely optimal price. 

\begin{cor}\label{cor:unique}
Suppose that $F$ is differentiable in a neighborhood of its mean, so that its density $f=F'$ is positive in that neighborhood. Then the optimal symmetric bilateral trade fixed-price mechanism that sets the price as the mean of the distribution $F$ is uniquely optimal.
\end{cor}

The proof of Corollary~\ref{cor:unique} can be directly obtained from the proof of Proposition~\ref{prop:price} by noting that the inequalities must hold strictly under the stated assumptions; we thus omit the details. In particular, Corollary~\ref{cor:unique} applies under the usual assumption that $F$ is continuously differentiable with positive density.

Despite Corollary~\ref{cor:unique}, we proceed without additional regularity assumptions on $F$ to obtain a general analysis. 
Given the mean $\mu:=\E[S]$ of the agents' distribution, we can now determine the expected welfare that the optimal fixed-price mechanism achieves, $\W(\mu; F)$:
\begin{align*}
\W(\mu;F)
&=\mu\cdot\left[1+F(\mu)\right]-\E[S\cdot\1_{S\leq \mu}]\\
&=\mu+\(\mu-\E[S\,|\, S\leq \mu]\)\cdot F(\mu).
\end{align*}
An interesting observation here is that the maximum expected welfare achieved by {\em any} fixed-price mechanism depends on the agents' distribution $F$ through only three quantities: (i)~the mean of the distribution, $\mu$; (ii)~the quantile of the mean, $F(\mu)$; and (iii)~the mean of the distribution conditional on being no greater than $\mu$, $\E[S\,|\, S\leq \mu]$. Consequently, any modification to $F$ that preserves these three quantities will lead to the same maximum expected welfare.

This observation motivates the following approach to find the approximation ratio we achieve by posting the mean as a price which, as we showed, is the best-possible among DSIC mechanisms:
\begin{align}\label{eq:M}
    \inf_F \frac{\W(\mu_F;F)}{\OPTW(F)} = \inf_F \sup_{p \in \R_+} \frac{\W(p,F)}{\OPTW(F)}.\tag{A}
\end{align}
Define the subspace of probability distributions that fixes the three quantities from above:
\[\Delta_{L^1}(\R_+;\mu,\mu_1,\g):=\left\{F\in\Delta_{L^1}(\R_+):\E[S]=\mu,\,\E[S\,|\, S\leq \mu]=\mu_1,\,F(\mu)=\g\right\}.\]
To compute \eqref{eq:M}, we first compute
\begin{equation}\label{eq:M'}
\adjustlimits \inf_{F\in\Delta_{L^1}(\R_+;\mu,\mu_1,\g)}\sup_{p\in\R_+}\frac{\W(p,F)}{\OPTW(F)}.\tag{A'}\end{equation}
The original problem \eqref{eq:M} is thus equivalent to 
\[\inf_{\substack{\mu\,\mu_1\,\g\in\R_+\\ 0\leq\mu_1\leq\mu;\,\mu>0;\,0<\g\leq1}}\left\{\adjustlimits\inf_{F\in\Delta_{L^1}(\R_+;\mu,\mu_1,\g)}\sup_{p\in\R_+}\frac{\W(p,F)}{\OPTW(F)}\right\}.\]

The objective of this decoupling is to reduce the dimension of our optimization problem. Instead of minimizing over the infinite-dimensional space $\Delta_{L^1}(\R_+)$, our outer optimization problem is simplified to minimization over three variables. However, the inner optimization problem \eqref{eq:M'} still requires minimizing over the infinite-dimensional space $\Delta_{L^1}(\R_+;\mu,\mu_1,\g)$. To reduce the dimension of \eqref{eq:M'}, define the space of distributions {\em supported on at most 4 points} that fixes the three quantities:
\begin{align*}
    \Delta_{L^1}^{(4)}(\R_+;\mu,\mu_1,\g):=\{F\in\Delta_{L^1}(\R_+;\mu,\mu_1,\g): F(x)=&q_0\cdot\1_{x\geq0}+q_1\cdot\1_{x\geq x_1} \\
    &+q_2\cdot\1_{x\geq x_2}+q_3\cdot\1_{x\geq 1}\}.
\end{align*}
The probability masses are given by $0\leq q_0,q_1,q_2,q_3\leq 1$.


\begin{lem} \label{lem:dimensionality}
For any fixed $0\leq \mu_1\leq\mu\leq1$, $\mu>0$ and $0<\g\leq1$,
\[\adjustlimits \inf_{F\in\Delta_{L^1}(\R_+;\mu,\mu_1,\g)}\sup_{p\in\R_+}\frac{\W(p,F)}{\OPTW(F)}= \inf_{F\in\Delta_{L^1}^{(4)}(\R_+;\mu,\mu_1,\g)}\sup_{p\in\R_+}\frac{\W(p,F)}{\OPTW(F)}.\]
\end{lem}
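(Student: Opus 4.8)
The plan is to reduce the claim, in two steps, to a pair of elementary moment computations. First, by Proposition~\ref{prop:price} the inner supremum $\sup_{p\in\R_+}\W(p,F)$ is attained at $p=\E[S]=\mu$, and the formula $\W(\mu;F)=\mu+\bigl(\mu-\E[S\cond S\le\mu]\bigr)F(\mu)$ derived just above shows this value equals the constant $\kappa:=\mu+(\mu-\mu_1)\g$ for \emph{every} $F\in\Delta_{L^1}(\R_+;\mu,\mu_1,\g)$, hence also for every $F\in\Delta_{L^1}^{(4)}(\R_+;\mu,\mu_1,\g)$. Since $\OPTW(F)=\mu+\OPTGFT(F)$, each side of the asserted identity therefore equals $\kappa/\bigl(\mu+\sup_F\OPTGFT(F)\bigr)$ with the supremum taken over the corresponding family, so it suffices to prove
$$\sup_{F\in\Delta_{L^1}(\R_+;\mu,\mu_1,\g)}\OPTGFT(F)=\sup_{F\in\Delta_{L^1}^{(4)}(\R_+;\mu,\mu_1,\g)}\OPTGFT(F).$$
The inequality ``$\ge$'' is immediate since $\Delta_{L^1}^{(4)}\subseteq\Delta_{L^1}$; the work lies in showing the left-hand supremum is approached by four-atom distributions.

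Here I would use $\OPTGFT(F)=\int_0^\infty F(x)(1-F(x))\,\dd x$ and split the integral at $\mu$. With $\mu,\mu_1,\g$ fixed the two pieces decouple: $\int_0^\mu F(1-F)\,\dd x$ depends on $F$ only through its restriction $\nu_-$ to $[0,\mu]$, which the constraints force to have mass $\g$ and first moment $\g\mu_1$, while $\int_\mu^\infty F(1-F)\,\dd x$ depends only on the restriction $\nu_+$ to $(\mu,\infty)$ (since there $F(x)=\g+\nu_+((\mu,x])$), which has mass $1-\g$ and first moment $\mu-\g\mu_1$ by $\E[S]=\mu$. So one can maximize the two pieces separately over measures with prescribed zeroth and first moments and then glue.

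Each piece becomes the minimization of an $L^2$ norm. On $[0,\mu]$, $\int_0^\mu F(1-F)=\int_0^\mu F-\int_0^\mu F^2$ with $\int_0^\mu F=\g(\mu-\mu_1)$ fixed, so one minimizes $\int_0^\mu F^2$; and via $\int_0^\mu F(x)^2\,\dd x=\mu\g^2-\iint\max(y,z)\,\dd\nu_-\dd\nu_-$, $\iint\max=2\g^2\mu_1-\iint\min$, and $\iint\min(y,z)\,\dd\nu_-\dd\nu_-=\int_0^\mu(\g-F(x))^2\,\dd x$, this equals minimizing $\int_0^\mu(\g-F)^2\,\dd x$, for which Cauchy--Schwarz gives the bound $\tfrac1\mu\bigl(\int_0^\mu(\g-F)\bigr)^2=\g^2\mu_1^2/\mu$, attained by the two-atom measure $\nu_-=\g\bigl(1-\tfrac{\mu_1}{\mu}\bigr)\delta_0+\g\tfrac{\mu_1}{\mu}\delta_\mu$. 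The piece on $(\mu,\infty)$ is symmetric: writing $F(1-F)=(1-F)-(1-F)^2$ with $\int_\mu^\infty(1-F)=\g(\mu-\mu_1)$ fixed, one minimizes $\int_\mu^\infty(1-F)^2$, and Cauchy--Schwarz (equivalently the same $\iint\min$ identity applied to $\nu_+$) shows the infimum is approached by two-atom upper measures with one atom just above $\mu$ and the other at the top of the support, the masses and that location being pinned by the two moment constraints. Gluing $\nu_-$ with such a $\nu_+$ gives distributions supported on at most four points whose $\OPTGFT$ converges to the left-hand supremum, which proves the remaining inequality.

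The main obstacle is making this last step rigorous and matching it to the precise four-atom normal form. Three points require care. (i) Only the piece on $[0,\mu]$ is genuinely maximized; the piece on $(\mu,\infty)$ has its supremum merely \emph{approached}, since a second atom cannot sit exactly at the boundary $\mu$ (and, depending on $\mu,\mu_1,\g$, the other atom may have to be pushed toward $+\infty$) — so the statement is, correctly, an equality of suprema rather than of maxima. (ii) One should delimit the range of $(\mu,\mu_1,\g)$ for which these families are nonempty and verify that the glued distribution satisfies all three constraints. (iii) The set $\Delta_{L^1}^{(4)}$ pins the largest support point at $1$; matching this uses the scale invariance of $\W(p,F)/\OPTW(F)$ under rescaling of $F$, which is also what lets us assume the relevant support point is at $1$ and hence $\mu\le1$, consistent with the hypotheses.
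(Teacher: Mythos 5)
Your argument is correct and is essentially the paper's own proof in different dress: the reduction to maximizing $\OPTW$ (equivalently $\OPTGFT$) over the constrained family using $\sup_p\W(p,F)=\W(\mu;F)=\mu+(\mu-\mu_1)\g$, the decoupling at $\mu$, and the Cauchy--Schwarz step identifying two-atom measures on each side of $\mu$ as extremal (with the atom just above $\mu$ forcing the $\delta>0$ in $\Delta^{(4)}_{L^1}$) are exactly the ingredients of the paper's proof, which merely packages them as an explicit dominating distribution $\tilde F$ constructed from each given $F$ rather than as a computation of decoupled suprema. The caveats you flag in (i)--(iii) --- non-attainment of the upper piece, nonemptiness of the constrained families, and the rescaling that pins the top support point at $1$ --- are present in, and handled with the same brevity by, the paper's version.
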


We defer the proof of Lemma~\ref{lem:dimensionality} to Appendix~\ref{app:proofs} and sketch the argument here. Fix $F\in\Delta_{L^1}(\R_+;\mu,\mu_1,\g)$.  We may assume without loss of generality that $F$ is a finite support discrete distribution because any $F\in\Delta_{L^1}(\R_+;\mu,\mu_1,\g)$ can be approximated arbitrarily well by a corresponding distribution in $\Delta_{L^1}(\R_+;\mu,\mu_1,\g)$ supported on finitely many points. Moreover, by rescaling if necessary, we may also assume that $F(1)=1$.

To solve the inner problem \eqref{eq:M'}, our earlier observation shows that, given $\mu$, $\mu_1$ and $\g$, $\W(\mu;F)$ is constant for {\em any} $F\in\Delta_{L^1}(\R_+;\mu,\mu_1,\g)$. That is:
\[\adjustlimits \inf_{F\in\Delta_{L^1}(\R_+;\mu,\mu_1,\g)}\sup_{p\in\R_+}\frac{\W(p,F)}{\OPTW(F)}= \inf_{F\in\Delta_{L^1}(\R_+;\mu,\mu_1,\g)}\frac{\mu+\(\mu-\mu_1\)\cdot\g}{\OPTW(F)}.\]
 Thus \eqref{eq:M'} can be solved by {\em maximizing} $\OPTW(F)$ over all $F\in\Delta_{L^1}(\R_+;\mu,\mu_1,\g)$. Note that, from the definition,
\[\OPTW(F)=\E[\max\{S,B\}]=\E[S]+\frac12\,\E[|B-S|].\]
We apply the following procedure, which does not decrease $\OPTW(F)$:
\begin{itemize}
\item For any probability mass in $(0,\mu)$, split the mass into two equal masses. Move each mass in opposite directions, until one mass hits the boundary of the interval $[0,\mu]$.
\item For any probability mass in $(\mu,1)$, split the mass into two equal masses. Move each mass in opposite directions, until one mass hits the boundary of the interval $[\mu+\delta,1]$, for some sufficiently small $\delta>0$.\footnote{Note that this $\delta>0$ is required so that this operation does not change $F(\mu)=\g$.}
\end{itemize}

Each of these operations is mean-preserving in the intervals $[0,\mu]$ and $[\mu,1]$, so neither changes the quantities $\mu=\E[S]$, $\mu_1=\E[S\,|\, S\leq \mu]$ and $\g=F(\mu)$. Therefore, the operations preserve the maximum expected welfare achieved by any fixed-price mechanism, $\W(\mu;F)=\mu+\(\mu-\mu_1\)\cdot\g$. However, they increase $\E[|B-S|]$ in the respective intervals, and so the operations increase $\OPTW(F)$. Applying these operations repeatedly, \footnote{In the proof of Lemma~\ref{lem:dimensionality} in Appendix~\ref{app:proofs}, we consider only the limiting distribution $\tilde F$ obtained from recursive application of these operations; however, these operations motivate the construction of $\tilde F$ in the proof.} we obtain a 4-point distribution $\tilde F\in \Delta_{L^1}^{(4)}(\R_+;\mu,\mu_1,\g)$ such that
\[\frac{\W(p;\tilde F)}{\OPTW(\tilde F)}\leq \frac{\W(p,F)}{\OPTW(F)}.\]
Since $\Delta_{L^1}^{(4)}(\R_+;\mu,\mu_1,\g)\subset \Delta_{L^1}(\R_+;\mu,\mu_1,\g)$, this proves the result of Lemma~\ref{lem:dimensionality}. 


As a consequence of Lemma~\ref{lem:dimensionality}, we can restrict attention to 4-point distributions 
when solving the minimax problem \eqref{eq:M}. 
Such distributions $F\in \Delta_{L^1}^{(4)}(\R_+;\mu,\mu_1,\g)$ can be written as\footnote{The definition of $\Delta_{L^1}^{(4)}(\R_+;\mu,\mu_1,\g)$ is slightly more general as it allows for any $F(x)$ in the form \[F(x)=q_0\cdot\1_{x\geq0}+q_1\cdot\1_{x\geq x_1}+q_2\cdot\1_{x\geq x_2}+q_3\cdot\1_{x\geq1}.\] However, from the proof sketch of Lemma~\ref{lem:dimensionality} (and further justified in the proof given in Appendix~\ref{app:proofs}), we may narrow our attention to distributions $F$ where $x_1=\mu$ and $x_2=\mu+\delta$ for sufficiently small $\delta>0$.}
\[F(x)=q_0\cdot\1_{x\geq0}+q_1\cdot\1_{x\geq \mu}+q_2\cdot\1_{x\geq\mu+\delta}+q_3\cdot\1_{x\geq1}\quad\text{for some }\delta\in(0,1-\mu).\]
Here, the probability masses $q_0$, $q_1$, $q_2$ and $q_3$ satisfy $q_0+q_1+q_2+q_3=1$ and the three additional conditions:
\[\begin{dcases}
\E[S] =q_1\mu+q_2\(\mu+\delta\)+q_3 &=\mu,\\
\E[S\,|\, S\leq\mu] = \frac{q_1\mu}{q_0+q_1} &=\mu_1,\\
F(\mu) = q_0 + q_1 &=\g.\end{dcases}\]
Since $\mu$, $\mu_1$ and $\g$ are given, we can rewrite $F$ as follows:\footnote{Details of this computation can be found in the proof of Lemma~\ref{lem:dimensionality} given in Appendix~\ref{app:proofs}.}
\[F(x)=\g\(1-\frac{\mu_1}{\mu}\)\cdot\1_{x\geq0}+\frac{\mu_1\g}{\mu}\cdot\1_{x\geq\mu}+\frac{1-\g-\mu+\mu_1\g}{1-\mu-\delta}\cdot\1_{x\geq\mu+\delta}+\frac{\mu\g-\mu_1\g-\delta-\delta\g}{1-\mu-\delta}\cdot\1_{x\geq1}.\]

This parametrization of $F$ allows us to easily compute $\W(\mu;F)$ and $\OPTW(F)$. We have shown previously that $\W(\mu;F)$ depends only on $\mu$, $\mu_1$ and $\g$:
\[\W(\mu;F)=\mu+\(\mu-\mu_1\)\cdot\g.\]
On the other hand, $\OPTW(F)$ can be computed by observing that $B,S$ must be either: (i) both no greater than $\mu$; (ii) both greater than $\mu$; or (iii) on different sides of $\mu$. Thus:
\begin{align*}
\OPTW(F) 
&=\E[\max\{B,S\}]\\
&=\E[\max\{B,S\}\cdot\1_{B,S\leq\mu}]+\E[\max\{B,S\}\cdot\1_{B,S>\mu}]\\
&\qquad+\E[\max\{B,S\}\cdot\1_{\max\{B,S\}>\mu}\1_{\min\{B,S\}\leq\mu}].
\end{align*}
The first and third contributions can be computed separately:
\[\begin{dcases}
\E[\max\{B,S\}\cdot\1_{B,S\leq\mu}]=[F(\mu)]^2\cdot\mu\(1-\pr[B,S=0\,|\, B,S\leq\mu]\)&=\mu_1\g^2\(2-\frac{\mu_1}{\mu}\),\\
\E[\max\{B,S\}\cdot\1_{\max\{B,S\}>\mu}\1_{\min\{B,S\}\leq\mu}]=2\,[F(\mu)]\cdot\E[S\cdot\1_{S>\mu}]&=2\g\(\mu-\g\mu_1\).\end{dcases}\]
The second contribution can be bounded above independently of $\delta$:
\begin{align*}
\E[\max\{B,S\}\cdot\1_{B,S>\mu}]
&=[1-F(\mu)]^2\cdot\left\{1-\(1-\mu-\delta\)\cdot\pr[B=S=\mu+\delta\,|\, B,S>\mu]\right\}\\
&=\(1-\g\)^2-\frac{\(1-\g-\mu+\mu_1\g\)^2}{1-\mu-\delta}\\
&\leq\(1-\g\)^2-\frac{\(1-\g-\mu+\mu_1\g\)^2}{1-\mu}.
\end{align*}
Consequently, we obtain an upper bound for $\OPTW(F)$ that is independent of $\delta$:
\[\OPTW(F)\leq \mu_1\g^2\(2-\frac{\mu_1}{\mu}\)+2\g\(\mu-\g\mu_1\)+\(1-\g\)^2-\frac{\(1-\g-\mu+\mu_1\g\)^2}{1-\mu}.\]
Therefore, we have
\[\frac{\W(\mu;F)}{\OPTW(F)}\geq \frac{\mu+\(\mu-\mu_1\)\g}{\mu_1\g^2\(2-\frac{\mu_1}{\mu}\)+2\g\(\mu-\g\mu_1\)+\(1-\g\)^2-\frac{\(1-\g-\mu+\mu_1\g\)^2}{1-\mu}}.\]
This allows us to bound the value of the minimax problem \eqref{eq:M} from below:
\begin{align*}
\adjustlimits \inf_{F\in\Delta_{L^1}(\R_+)}\sup_{p\in\R_+}\frac{\W(p,F)}{\OPTW(F)}
&=\inf_{\substack{\mu\,\mu_1\,\g\in\R_+\\0\leq\mu_1\leq\mu;\,\mu,\g\in(0,1]}}\left\{\inf_{F\in\Delta_{L^1}(\R_+;\mu,\mu_1,\g)}\frac{\W(\mu;F)}{\OPTW(F)}\right\}\\
&\geq \inf_{\substack{\mu\,\mu_1\,\g\in\R_+\\0\leq\mu_1\leq\mu;\,\mu,\g\in(0,1]}}\frac{\mu+\(\mu-\mu_1\)\g}{\mu_1\g^2\(2-\frac{\mu_1}{\mu}\)+2\g\(\mu-\g\mu_1\)+\(1-\g\)^2-\frac{\(1-\g-\mu+\mu_1\g\)^2}{1-\mu}}.
\end{align*}

The lower bound in the above expression can be computed. Notably, optimization is carried out over a 3-dimensional space, and so elementary techniques apply. While we relegate the computational details to Appendix~\ref{app:proofs}, we state the result here:
\begin{lem}\label{lem:l-bound}
For symmetric bilateral trade, the minimax value of the game is bounded below:
\[\adjustlimits \inf_{F\in\Delta_{L^1}(\R_+)}\sup_{p\in\R_+}\frac{\W(p,F)}{\OPTW(F)}\geq\frac{2+\sqrt{2}}{4}.\]
\end{lem}

This bound is, in fact, tight. To prove this, we demonstrate that the bound can be attained by explicitly constructing a minimizing sequence of distributions $\{F_n\}_{n=1}^\infty$. Formally, we consider
\[F_n(x)=(\sqrt{2}-1)\left(1-\frac1n\right)\cdot\1_{x\geq0}+(2-\sqrt2)\cdot\1_{x\geq 1/n}+\frac1n\,(\sqrt{2}-1)\cdot\1_{x\geq1}.\]
The mean of the distribution $F_n$ is $\mu_n=1/n$. Moreover, the quantile of the mean is $\g_n=F_n(\mu_n)=1-(\sqrt{2}-1)/n$ and
\[\mu_{1,n}=\E[S\,|\, S\leq\mu_n]=\frac{2-\sqrt{2}}{n-(\sqrt{2}-1)}.\]
Thus
\[\W(\mu_n;F_n)=\mu_n+\(\mu_n-\mu_{1,n}\)\g_n=\frac{\sqrt{2}}{n}-\frac{\sqrt{2}-1}{n^2}.\]
On the other hand, we can compute that
\[\OPTW(F_n)=\frac{4\,(\sqrt{2}-1)}{n}-\frac{4\sqrt{2}-1}{n^2}.\]
Therefore, this sequence of distributions achieves the lower bound in the limit as $n\to\infty$:
\[\lim_{n\to\infty}\frac{\W(\mu_n;F_n)}{\OPTW(F_n)}=\frac{\sqrt{2}}{4\,(\sqrt{2}-1)}=\frac{2+\sqrt{2}}{4}.\]
This proves our main result:
\begin{thm}\label{thm:main}
For symmetric bilateral trade, the minimax value of the game is
\[\adjustlimits \inf_{F\in \Delta_{L^1}(\R_+)}\sup_{p\in\R_+}\frac{\W(p,F)}{\OPTW(F)}=\frac{2+\sqrt{2}}{4}.\]
That is, for any distribution $F$, the designer can always select a price that achieves at least $(2+\sqrt{2})/4\approx0.8536$ of the total expected welfare under the optimally efficient outcome.
\end{thm}

\subsection{Approximations of Welfare in the Asymmetric Model}

We now relax the requirement that both agents' values are drawn from the same distribution. Let the seller's (resp.~buyer's) value be drawn from $F_S$ (resp.~$F_B$). Accordingly, we modify the notation used previously. We denote the welfare function by
\[\W(p;F_S,F_B):=\E[S+\(B-S\)\cdot\1_{B \ge p\geq S}],\]
where the expectation is taken over $S\sim F_S$ and $B\sim F_B$. Likewise, we denote the expected welfare under the first-best outcome by
\[\OPT(F_S,F_B):=\E[S+\(B-S\)\cdot\1_{B \ge S}].\]
Therefore the best-possible posted-price approximation ratio is
\begin{equation}
\adjustlimits \inf_{F_S,F_B\in\Delta_{L^1}(\R_+)}\sup_{p\in\R_+}\frac{\W(p;F_S,F_B)}{\OPT(F_S,F_B)}.\tag{A-ASYM}
\label{eq:A-ASYM}
\end{equation}

As mentioned previously, \cite{Dobzinski-Blumrosen} shows that the value of the minimax problem \eqref{eq:A-ASYM} is bounded below by $1-1/e$. This is done through explicit construction of a randomized mechanism that achieves this bound in expectation. While this mechanism requires the designer to use detailed distributional information, it has the interesting property that the designer requires knowledge of only the seller's distribution and not the buyer's. Our aim is to exploit the knowledge of both distributions towards a provable improvement. First, we prove the following:

\begin{prop}\label{prop:3/4}
Given distributions $F_S$ and $F_B$ that satisfy $\E[(S-B)_+]=0$, the designer can always select a price that achieves at least $3/4$ of the optimal expected welfare. That is:
\[\inf_{\substack{F_S,F_B\in\Delta_{L^1}(\R_+)\\ \E[(S-B)_+]=0}}\sup_{p\in\R_+}\frac{\W(p;F_S,F_B)}{\OPT(F_S,F_B)} \geq \frac34.\]
\end{prop}
We defer the proof of this proposition and all other results in this subsection to the appendix. We interpret Proposition~\ref{prop:3/4} as a statement of how asymmetry between the agents can lead to considerable, albeit bounded, welfare loss. In Proposition~\ref{prop:3/4}, the asymmetry is ``worst possible'' in the sense that the seller's value is almost surely no greater than the buyer's value. Yet, ``worst possible'' asymmetry is not the same as the worst-case scenario for the minimax problem \eqref{eq:A-ASYM}. Indeed, the designer can find a price that separates all the seller's possible values from the buyer's possible values; because trade does not occur when the buyer's value is exactly equal to this price, inefficiency may result.\footnote{We emphasize that this result, along with our other results in this paper, is not an artifact of the tiebreak rule that we have adopted, which specifies that trade occurs if and only if $B \ge p\geq S$. For instance, if a different tiebreak rule were to be chosen such that trade occurs if and only if $B \ge p \ge S$, then a similar result obtains in the limit $\E[(S-B)_+]\to0$ (e.g., let $F_B(x)=(1-1/n)\cdot\1_{x\geq 1/n-1/n^2}+1/n\cdot\1_{x\geq1}$).}

When asymmetry is not as severe, we can improve the lower bound to the minimax problem \eqref{eq:A-ASYM} through the mechanism proposed by \cite{Dobzinski-Blumrosen}:

\begin{prop}[Theorem 4.1 in \cite{Dobzinski-Blumrosen}]\label{prop:bd}
For any given distributions $F_S$ and $F_B$,
\[\sup_{p\in\R_+}\frac{\W(p;F_S,F_B)}{\OPT(F_S,F_B)}\geq 1-\frac1e+\frac1e\cdot\E[\(S-B\)_+].\]
\end{prop}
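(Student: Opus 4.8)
The plan is to reconstruct the mechanism of \cite{Dobzinski-Blumrosen} and track the extra term coming from $\E[(S-B)_+]$. The mechanism draws the price $p$ randomly from the seller's distribution in a particular way: sample $S'\sim F_S$ independently, and post $p=S'$ (or, in the continuous formulation, post a price with density proportional to $f_S$ on the support). For a fixed realization of $B$ and $S$, the welfare obtained is $S + (B-S)\cdot\1_{B>p\geq S}$. Conditioned on $S$ and $B$ with $S\leq B$, the probability over $p\sim F_S$ that $S\leq p<B$ is $F_S(B^-)-F_S(S^-)\geq F_S(S)-F_S(S^-)$; more usefully, averaging the indicator against $f_S$ and using that $\int_S^B f_S(x)\,\dd x$ compares favorably to $\int$ of a suitable weighting, one gets the standard bound that the expected gains-from-trade captured is at least $(1-1/e)$ times $\E[(B-S)_+]$. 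I would follow the original argument: write $\E_p[\W(p;F_S,F_B)] = \E[S] + \E_{S,B}[(B-S)_+\cdot \pr_p(S\leq p<B)]$ and lower-bound $\pr_p(S\leq p<B)$ pointwise.

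The key new step is the observation that when $S>B$ — an event of probability mass governed by $\E[(S-B)_+]$ — the first-best outcome is $S$ (no trade), and \emph{any} fixed price $p$ also yields welfare exactly $S$, since trade requires $B>p\geq S>B$, impossible. So on this event the mechanism loses \emph{nothing}: the ratio contribution is $1$, not $1-1/e$. Concretely, I would split $\OPT(F_S,F_B) = \E[S] + \E[(B-S)_+]$ and $\E_p[\W] = \E[S] + \E[(B-S)_+\cdot\pr_p(S\leq p<B)] \geq \E[S] + (1-1/e)\E[(B-S)_+]$, then write $\E[(B-S)_+] = \E[\max\{B,S\}] - \E[S] = \OPT - \E[S]$ and also relate $\E[(S-B)_+]$ to the normalization; after normalizing $\OPT(F_S,F_B)=1$, the slack $\E[S] - $ (something) must be re-expressed. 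The cleanest route: normalize so that $\OPT(F_S,F_B)=\E[\max\{B,S\}]=1$, note $\E[(S-B)_+] = \E[\max\{B,S\}] - \E[B] = 1-\E[B]$ and $\E[(B-S)_+] = 1-\E[S]$, so $\E_p[\W] \geq \E[S] + (1-1/e)(1-\E[S])$, and the claim $\E_p[\W]\geq 1-1/e+\frac1e\E[(S-B)_+] = 1-1/e+\frac1e(1-\E[B])$ reduces to $\E[S] + (1-1/e)(1-\E[S]) \geq 1-1/e + \frac1e - \frac1e\E[B]$, i.e. $\frac1e\E[S]\geq \frac1e - \frac1e\E[B]$, i.e. $\E[S]+\E[B]\geq 1 = \E[\max\{B,S\}]$, which is immediate since $S+B\geq\max\{B,S\}$ pointwise.

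The main obstacle is getting the pointwise bound $\pr_{p\sim F_S}(S\leq p<B)\geq (1-1/e)\cdot\frac{(B-S)_+}{\text{(appropriate denominator)}}$ right — this is precisely where the $1-1/e$ constant is born in \cite{Dobzinski-Blumrosen}, and it requires handling the randomized price construction carefully (the price is not literally $S'\sim F_S$ but a carefully chosen distribution on $[0,\sup\mathrm{supp}(F_S)]$ so that the trade probability, integrated against the buyer's contribution, never drops below $1-1/e$). Since we are allowed to invoke results stated earlier, and Proposition~\ref{prop:bd} is quoted \emph{as} Theorem 4.1 of \cite{Dobzinski-Blumrosen} with the extra additive term, the honest accounting is: the bare bound $\E_p[\W]\geq \E[S]+(1-1/e)\E[(B-S)_+]$ is exactly their theorem, and the only thing to verify is the final one-line inequality $\E[S]+\E[B]\geq\E[\max\{B,S\}]$ converting it into the stated form. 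I would present the argument in that order: recall their mechanism's guarantee, substitute the normalized moment identities, and close with the pointwise inequality $S+B\geq\max\{S,B\}$.
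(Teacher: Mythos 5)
Your central premise is false, so the proof does not go through as written. You assert that the Dobzinski--Blumrosen mechanism satisfies $\E_p[\W(p;F_S,F_B)]\geq\E[S]+(1-\tfrac1e)\,\E[(B-S)_+]$ and call this ``exactly their theorem.'' It is not: their guarantee is multiplicative in \emph{welfare}, $\E_p[\W]\geq(1-\tfrac1e)\,\OPT(F_S,F_B)=(1-\tfrac1e)\bigl(\E[S]+\E[(B-S)_+]\bigr)$. The form you invoke would amount to a $(1-1/e)$-approximation of \emph{gains from trade} by a fixed-price mechanism in the asymmetric setting, which is impossible --- no constant-factor GFT approximation exists there (see Table~\ref{tab:survey}). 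Correspondingly, the bound your display needs, namely that $\pr_{p}(S\leq p<B)$ averages to at least $1-1/e$ against the weight $(B-S)_+$, fails for general $F_S,F_B$ no matter how the price distribution is chosen. And the correct multiplicative guarantee alone is too weak for your closing algebra: after normalizing $\OPT=1$ it gives only $1-1/e$, leaving nothing from which to extract the additive $\tfrac1e\E[(S-B)_+]$ term.

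Your ``key new step'' --- that on the event $S>B$ neither the mechanism nor the first-best outcome loses anything --- is exactly the right structural observation, but it has to be exploited differently. The paper's route is: reduce to a point-mass buyer $\Phi_b$ by linearity of expectation (the price distribution $G^*$ depends only on $F_S$); truncate the seller's distribution above $b$ down to $b$; observe that this truncation removes \emph{the same} quantity $\E[(S-b)_+]$ from both $\E_{p\sim G}[\W]$ and $\OPT$, precisely because no trade occurs when $S>b$; apply the multiplicative $(1-1/e)$ guarantee to the truncated instance; and add $\E[(S-b)_+]$ back to both sides, which yields $\E_{p\sim G^*}[\W(p;F_S,\Phi_b)]\geq(1-\tfrac1e)\OPT(F_S,\Phi_b)+\tfrac1e\E[(S-b)_+]$. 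Averaging over $b\sim F_B$ finishes the proof. That truncation step is the missing idea; your attempt to route the same intuition through the additive GFT-style bound cannot be repaired.
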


In view of Propositions~\ref{prop:3/4} and \ref{prop:bd}, our strict improvement to the $1-1/e$ bound can be achieved heuristically as follows:
\begin{itemize}
\item If $\E[(S-B)_+]$ is sufficiently small, then we construct a mechanism that exploits the severe asymmetry in the agents' distributions to get a performance ratio close to $3/4$, as in Proposition~\ref{prop:3/4}.
\item If $\E[(S-B)_+]$ is large, then the mechanism of \cite{Dobzinski-Blumrosen} already ensures a performance ratio strictly higher than $1-1/e$.
\end{itemize}
Our improvement is documented by the following:\footnote{We qualify our result in Theorem~\ref{thm:small} by noting that we did not optimize for the strict constant-factor improvement of 0.0001, as our goal was to demonstrate that such an improvement exists. We do not believe that optimizing for this constant would yield a substantially higher improvement.}
\begin{thm}\label{thm:small}
Given distributions $F_S$ and $F_B$, the designer can always select a price that achieves at least $1-1/e+0.0001$ of the optimal expected welfare. That is:
\[\adjustlimits\inf_{F_S,F_B\in\Delta_{L^1}(\R_+)}\sup_{p\in\R_+}\frac{\W(p;F_S,F_B)}{\OPT(F_S,F_B)}\geq 1-\frac1e+0.0001.\]
\end{thm}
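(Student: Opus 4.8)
The plan is to interpolate between two lower bounds on the welfare ratio according to the size of the ``reverse gap''. Normalizing so that $\OPT(F_S,F_B)=1$, write $\beta:=\E[(S-B)_+]\in[0,1]$, so that Proposition~\ref{prop:bd} reads $\sup_{p}\W(p;F_S,F_B)\geq 1-\tfrac1e+\tfrac1e\beta$. When $\beta$ is bounded away from $0$, Proposition~\ref{prop:bd} alone already gives a ratio exceeding $1-\tfrac1e+0.0001$, so all the work is in the complementary regime of tiny $\beta$: there the task is to upgrade Proposition~\ref{prop:3/4} from the boundary case $\beta=0$ to a statement that degrades continuously in $\beta$.

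So the first and main step is a quantitative strengthening of Proposition~\ref{prop:3/4}: for \emph{every} pair of distributions,
\[\sup_{p\in\R_+}\frac{\W(p;F_S,F_B)}{\OPT(F_S,F_B)}\ \geq\ \frac34-h(\beta),\]
where $h$ is an explicit continuous function with $h(0)=0$ (any modulus of continuity suffices; a linear bound $h(\beta)=C\beta$ would be cleanest). Two routes are available. The first is to revisit the proof of Proposition~\ref{prop:3/4} and track exactly where the hypothesis $\E[(S-B)_+]=0$ is used, showing that dropping it introduces only an additive error proportional to $\E[(S-B)_+]$. The second, self-contained route is a truncation argument: choose the threshold $c$ with $\E[(c-B)_+]=\E[(S-B)_+]$, and replace $(S,B)$ by the independent pair $\widetilde S:=\min\{S,c\}$, $\widetilde B:=\max\{B,c\}$. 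Then $\widetilde S\leq c\leq\widetilde B$ almost surely, and $\OPT(\widetilde F_S,\widetilde F_B)=\E[\widetilde B]=\E[B]+\E[(c-B)_+]=\E[B]+\E[(S-B)_+]=\OPT(F_S,F_B)$. Proposition~\ref{prop:3/4} applied to $(\widetilde F_S,\widetilde F_B)$ then supplies a price $p$ with $\W(p;\widetilde F_S,\widetilde F_B)\geq\tfrac34\,\OPT(F_S,F_B)$, and one checks that passing back to the original distributions changes $\W(p;\cdot)$ by an amount controlled by $\E[(S-B)_+]$, since only the mass with $S>c$ or $B<c$ has moved and both are governed by $\E[(S-B)_+]$. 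A minor point is the half-open tie-break $B>p\geq S$ at $p=c$, which affects only the atom created at $c$ and can be absorbed into the error or sidestepped by perturbing $c$, in the spirit of the footnote after Proposition~\ref{prop:3/4}.

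With the robust lemma in hand the theorem is bookkeeping. For every instance,
\[\sup_{p\in\R_+}\frac{\W(p;F_S,F_B)}{\OPT(F_S,F_B)}\ \geq\ \max\Bigl\{\,1-\tfrac1e+\tfrac1e\beta,\ \ \tfrac34-h(\beta)\,\Bigr\}.\]
Fix a threshold $\beta_0>0$: if $\beta\geq\beta_0$ the first term is at least $1-\tfrac1e+\tfrac1e\beta_0$, and if $\beta<\beta_0$ the second is at least $\tfrac34-h(\beta_0)$. Since $\tfrac34-(1-\tfrac1e)=\tfrac1e-\tfrac14\approx 0.118>0$, one can pick $\beta_0$ at once small enough that $h(\beta_0)\leq\tfrac1e-\tfrac14-0.0001$ and large enough that $\tfrac1e\beta_0\geq0.0001$ — compatible for any reasonable $h$ (e.g.\ $\beta_0$ on the order of $10^{-3}$) — so that both terms are at least $1-\tfrac1e+0.0001$. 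This also explains why the slack $0.0001$ is loose rather than optimized: the best constant obtainable this way is the balance value of $\tfrac1e\beta_0$ and $\tfrac1e-\tfrac14-h(\beta_0)$, which hinges on the non-sharp modulus $h$.

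The crux, and the only non-routine ingredient, is the robust strengthening of Proposition~\ref{prop:3/4}: showing that a vanishing reverse gap $\E[(S-B)_+]$ costs only a vanishing amount in the welfare ratio, \emph{uniformly} over all distributions and with a modulus of continuity good enough for the two regimes to overlap. The case split and the closing arithmetic are straightforward by comparison.
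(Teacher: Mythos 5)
Your overall architecture is exactly the paper's: split on $\beta:=\E[(S-B)_+]/\OPT(F_S,F_B)$, use Proposition~\ref{prop:bd} when $\beta$ is above a threshold (the paper takes $\beta_0=0.0003$), and a quantitative version of Proposition~\ref{prop:3/4} below it; your closing arithmetic is consistent with the paper's numbers. But this case split is the heuristic the paper itself announces before the theorem statement; the entire content of the proof is the robust strengthening of Proposition~\ref{prop:3/4}, which you correctly identify as the crux but do not establish. Your ``route 1'' is a pointer to work not done, and your ``route 2'' (truncation at the level $c$ with $\E[(c-B)_+]=\E[(S-B)_+]$) contains a false step.

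The false step is the claim that passing from $(\widetilde S,\widetilde B)=(\min\{S,c\},\max\{B,c\})$ back to $(S,B)$ changes $\W(p;\cdot)$ by an amount controlled by $\E[(S-B)_+]$. Neither $\pr[S>c]$ nor $\E[(S-c)_+]$ (nor the buyer analogues) is controlled by $\E[(S-B)_+]$. Take $S$ uniform on $[0,9]$ and $B=10$ w.p.\ $1-\e$, $B=0$ w.p.\ $\e$: then $\E[(S-B)_+]=4.5\e\to0$, yet $c=4.5$, $\pr[S>c]=\tfrac12$ and $\E[(S-c)_+]=1.125$, and at the separating price $p=c$ the original welfare ($\approx 8.375$) differs from the truncated welfare ($\approx 10$) by about $0.16\cdot\OPT$, not by $O(\e)$. (The conclusion survives here since $0.8375>3/4$, but your error bound does not.) What independence and the definition of $c$ actually give is only that the \emph{products} $\E[(S-c)_+]\cdot\pr[B<c]$ and $\E[(c-B)_+]\cdot\pr[S>c]$ are $O(\E[(S-B)_+])$; one factor in each can be order one. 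This is precisely the subtlety the paper's argument is organized around: Lemma~\ref{lem:lem-2} deduces small tails $\pr[S\ge p^+],\pr[B\le p^-]\le\sqrt{10\alpha}$ only under the extra hypothesis that some $p^*$ has $\pr[S\ge p^*]>\tfrac15$ and $\pr[B\le p^*]>\tfrac15$, feeding into Lemma~\ref{lem:lem-1}'s bound $\tfrac34-2\sqrt{10\alpha}$ (note the square-root, not linear, modulus), while the complementary unbalanced case is handled by a separate argument (Lemma~\ref{lem:lem-3}) that only yields $\tfrac{17}{25}$ rather than anything near $\tfrac34$. So the clean statement ``$\tfrac34-h(\beta)$ for all instances with $h(0)=0$'' that your bookkeeping relies on is not what is available; what is available is $\min\{\tfrac{17}{25},\,\tfrac34-2\sqrt{10\alpha}\}$, which still clears $1-\tfrac1e+0.0001$ at $\beta_0=0.0003$, but proving it requires the two-case analysis your proposal omits.
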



As a counterpoint to Theorem~\ref{thm:small} and motivated by the mechanism proposed by \cite{Dobzinski-Blumrosen}, we consider mechanisms that use only information about the seller's distribution. The question that we ask here is:  Is it possible to achieve a strictly better performance than $1-1/e$ of optimal welfare with any mechanism that uses {\em only} information about the seller's distribution $F_S$? The answer is no:
\begin{thm}\label{thm:quantile}
Given distributions $F_S$ and $F_B$, for any (possibly randomized) mechanism that uses only quantile distributional knowledge of the seller's distribution, the designer can achieve no better than $1-1/e$ of the optimal expected welfare.
\end{thm}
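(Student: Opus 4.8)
The plan is a Yao-style averaging argument backed by an explicit extremal construction. By \cite{hagerty}, such a mechanism is a posted-price mechanism, i.e.\ it outputs a (possibly randomized) price; by hypothesis this price distribution depends only on $F_S$ (equivalently, on its quantiles). Fixing one seller distribution $F_S^*$, the mechanism is then nothing but a distribution $\mathcal{D}$ over prices, with no access to $F_B$. So it suffices to produce a single $F_S^*$ together with a distribution $\rho$ over nonnegative buyer values such that, writing $r(p,v):=\W(p;F_S^*,\delta_v)/\OPT(F_S^*,\delta_v)$,
\[\sup_{p\ge0}\ \E_{v\sim\rho}\!\left[r(p,v)\right]\ \le\ 1-\tfrac1e.\]
Given such $F_S^*$ and $\rho$: for any price distribution $\mathcal{D}$ we have $\E_{p\sim\mathcal{D}}\E_{v\sim\rho}[r(p,v)]\le 1-1/e$; swapping the two expectations (valid as $r$ is bounded) and using that $\OPT(F_S^*,\delta_v)$ does not depend on $p$, this reads $\E_{v\sim\rho}\!\left[\E_{p\sim\mathcal{D}}\W(p;F_S^*,\delta_v)/\OPT(F_S^*,\delta_v)\right]\le 1-1/e$, so some $v^*\in\mathrm{supp}(\rho)$ satisfies $\E_{p\sim\mathcal{D}}\W(p;F_S^*,\delta_{v^*})\le(1-1/e)\,\OPT(F_S^*,\delta_{v^*})$. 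Hence $F_B=\delta_{v^*}$ is an instance on which the given seller-only mechanism achieves at most $1-1/e$. Crucially, this step is pure averaging and invokes no minimax theorem --- which is convenient, because $r(p,v)$ has a downward jump at $p=v$ coming from the $B>p$ tiebreak.

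The heart of the proof is the choice of $F_S^*$ and $\rho$. The seller distribution must be rich enough that no single price performs well against a ``spread-out'' buyer, yet tractable to analyze; a uniform $F_S^*$ is too weak (uniform randomized pricing already beats $1-1/e$ against uniform $S$, essentially because the seller's own value is then comparable to $\OPT$), so I would take $F_S^*$ to be a carefully chosen continuous distribution --- concretely, a truncated equal-revenue-type distribution, paired with a roughly log-uniform adversary over its support, and with the truncation sent to infinity at the end; this is the family for which the induced zero-sum game has the smallest value. With $F_S^*$ fixed, one writes $r(p,v)$ explicitly, checks that the price $p$ and value $v$ may each be confined to a bounded range, and then guesses the extremal $\rho$ from the game's first-order optimality conditions --- the designer must be indifferent among all prices in the support of its optimal randomization, which yields a differential equation for the density of $\rho$. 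Solving this equation with the appropriate boundary and complementary-slackness conditions determines $\rho$, and the constant $1/e$ drops out of the resulting exponential/logarithmic integral. The argument then closes with a direct calculus check that $\sup_p\E_{v\sim\rho}[r(p,v)]=1-1/e$. Combined with the fact that the seller-only mechanism of \cite{Dobzinski-Blumrosen} already guarantees at least $1-1/e$ on every instance (Proposition~\ref{prop:bd}), this shows $1-1/e$ is exactly the best guarantee attainable using only seller information.

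I expect the main obstacle to be pinning down this extremal pair: identifying the worst seller distribution $F_S^*$ (equivalently, carrying out the outer $\inf_{F_S}$ over the game value) and the matching worst adversary $\rho$, and verifying that the value is exactly $1-1/e$ rather than something smaller. The surrounding technicalities --- restricting both players to a bounded range, keeping $\OPT(F_S^*,\delta_v)$ bounded away from $0$ so the ratio is well defined, and controlling both the $p=v$ discontinuity and the truncation limit --- are routine once the construction is in place.
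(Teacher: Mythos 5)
Your reduction to a single fixed seller distribution $F_S^*$ with a randomized buyer cannot work: the pair $(F_S^*,\rho)$ you are looking for does not exist. Suppose $\sup_p \E_{v\sim\rho}[r(p,v)]\le 1-1/e$. Averaging Proposition~\ref{prop:bd} over $v\sim\rho$ (the Dobzinski--Blumrosen rule is itself just one price distribution once $F_S^*$ is fixed) forces $\E_S[(S-v)_+]=0$ for $\rho$-a.e.\ $v$, i.e.\ $F_S^*$ must be supported on $[0,v_0]$ with $v_0$ the essential infimum of $\rho$. If $F_S^*$ has no atom at $v_0$, prices $p\nearrow v_0$ give $r(p,v)\ge F_S^*(p)\to1$ uniformly in $v$. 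If it has an atom of mass $q>0$ at $v_0$ and $\rho(\{v_0\})=0$, the price $p=v_0$ gives ratio $1$ for $\rho$-a.e.\ $v$. In the remaining case, set $a:=\rho(\{v_0\})>0$; then $p\nearrow v_0$ yields $\E_v[r]\to 1-q\(1-v_0\E[1/v]\)\ge 1-q(1-a)$, while $p=v_0$ yields $\E_v[r]=1-a\(1-\E[S]/v_0\)\ge 1-a(1-q)$, and since $\min\{q(1-a),a(1-q)\}\le 1/4<1/e$, one of these two prices already beats $1-1/e$ (indeed achieves at least $3/4$). So for \emph{every} fully known $F_S^*$ the designer beats $1-1/e$ against any randomized buyer; this is consistent with Theorem~\ref{thm:small} and shows the hardness cannot be placed on the buyer's side.

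The paper's proof puts the randomization where it must go: on the seller's distribution. The mechanism class in Theorem~\ref{thm:quantile} is a fixed distribution $G$ over quantiles, committed to before $F_S$ is realized and then composed with $F_S^{-1}$. Nature plays a \emph{deterministic} buyer $B=1$ and randomizes over seller distributions parameterized by $y=\pr[S<1]$ (mass $y$ near $0$, mass $1-y$ at $1$), choosing $y=1$ with probability $1/e$ and otherwise $y$ with density $1/(ey^2)$ on $[1/e,1]$; the payoff $V(x,y)=(1-y)+x\cdot\1_{x<y}$ then satisfies $\E_y[V(x,y)]\le 1-1/e$ for every pure quantile $x$. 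Note that for each fixed $y$ the designer could get value arbitrarily close to $1$ by taking $x\nearrow y$ --- the entire difficulty is the designer's ignorance of which $F_S$ it faces. Your indifference-ODE/equal-revenue intuition is exactly the right tool, but it must be applied to nature's distribution over seller distributions (producing the density $1/(ey^2)$ plus an atom at $y=1$), not to a distribution over buyer values against a known seller.
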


\section{Concluding Remarks}

Our results give a comprehensive description of the power of dominant-strategy truthful mechanisms for bilateral trade, in terms of two widely-studied objectives in mechanism design: welfare and gains from trade, together with the informational requirements needed to achieve those approximations. 

We would like to distinguish our paper from previous work as follows. In our view, the main contribution of our paper is to show that fixed-price mechanisms can be effective {\em particularly in markets with symmetric agents}, for reasons other than strategic simplicity for agents. Two important properties of fixed-price mechanisms in such markets is that they are informationally simple from the designer's perspective while being approximately efficient at the same time. We view robustness to different benchmarks (i.e., optimal welfare and optimal gains from trade) as an important criterion for ``approximate efficiency.'' In our view, previous approximability and inapproximability results in the asymmetric case, which crucially depend on benchmark adopted, constitute mainly a negative result for the asymmetric case. 

A fundamental question remains in the asymmetric case, which is whether a constant-factor approximation to optimal gains from trade is possible with any Bayesian incentive-compatible mechanism. We already know that such a mechanism cannot be incentive-compatible in dominant strategies, i.e.~it cannot be a fixed-price mechanism. There are other techniques to design Bayesian incentive-compatible mechanisms. In particular, the second-best mechanism is in some sense known, but its implicit description does not seem very useful for the derivation of approximation guarantees. It appears that a more specific Bayesian mechanism would be needed to resolve this question.


\printbibliography

\appendix

\section{Omitted Proofs}
\label{app:proofs}

We now derive the two simplified formulas for the optimal expected gains from trade and the expected gains from trade when posting price $p.$ We have
\begin{align*}
    \OPTGFT(F) &= \E[\1_{B \geq S}(B-S)] \\
    &= \int_0^\infty \int_0^b (b-s) f(s)f(b)\ \dd s \ \dd b \\
    &= \int_0^\infty \left[(b-s) F(s)\bigg|_{s=0}^b + \int_0^b F(s)\ \dd s\right]f(b)\ \dd b \\
    &= \lim_{B\to\infty}\int_0^B \int_0^b F(s)\ \dd s\cdot f(b)\ \dd b \\
    &= \lim_{B\to\infty}\left[\int_0^b F(s)\ \dd s\cdot F(b)\right]_{b=0}^B - \int_0^B F(b)^2 \ \dd b \\
    &= \lim_{B\to\infty }\int_0^B F(s)\ \dd s \cdot F(B) - \int_0^B F(b)^2\ \dd b\\
    &= \int_0^\infty F(x)(1-F(x))\ \dd x.
\end{align*}
Similarly for the expected gains from trade when posting price $p$, using integration by parts we get
\begin{align*}
    \GFT(p, F) &= \E[\1_{B\geq p \geq S}(B-S)] \\
    &= \E[\1_{B\geq p \geq S}(B-p)] + \E[\1_{B\geq p \geq S}(p-S)] \\
    &= \E[\1_{B\geq p}(B-p)]\pr(S \leq p) + \E[\1_{p \geq S}(p-S)] \pr(B\geq p)\\
    &= F(p)\int_p^\infty f(b)(b-p)\ \dd b + (1-F(p))\int_0^pf(s)(p-s)\ \dd s\\
    &= F(p)\int_p^\infty (1-F(x))\ \dd x + (1-F(p))\int_0^p F(x)\ \dd x.
\end{align*}
Using these formulas, we now offer an independent proof of the result of Babaioff, Goldner and Gonczarowski \cite{1/2-gft-one-sample} that posting a sample as a price in the symmetric model achieves a 1/2-approximation of welfare.

\begin{thm}\label{GFT-1/2}
The symmetric bilateral trade mechanism which under a valuation distribution $F$ posts a price $p\sim F$ achieves exactly 1/2 of the optimal gains from trade.
\end{thm}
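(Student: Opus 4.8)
The plan is a direct computation, starting from the two closed-form expressions already derived in the Model section: the formula
$$\GFT(p,F) = F(p)\int_p^\infty (1-F(x))\ \dd x + (1-F(p))\int_0^p F(x)\ \dd x$$
for the posted-price gains from trade, and the formula $\OPTGFT(F) = \int_0^\infty F(x)(1-F(x))\ \dd x$. Since the price is $p\sim F$, I would write $\E_{p\sim F}[\GFT(p,F)] = \int_0^\infty \GFT(p,F)\, f(p)\ \dd p$ and treat the two summands separately, calling them $T_1$ and $T_2$.

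For $T_1 = \int_0^\infty F(p)f(p)\big(\int_p^\infty (1-F(x))\ \dd x\big)\ \dd p$, I would interchange the order of integration over the region $0\le p\le x<\infty$ (all integrands are nonnegative, so Tonelli's theorem applies with no convergence worries). This gives $T_1 = \int_0^\infty (1-F(x))\big(\int_0^x F(p)f(p)\ \dd p\big)\ \dd x$, and the inner integral is exactly $\tfrac12 F(x)^2$ because $F(0)=0$. Symmetrically, for $T_2 = \int_0^\infty (1-F(p))f(p)\big(\int_0^p F(x)\ \dd x\big)\ \dd p$, swapping over $0\le x\le p<\infty$ yields $T_2 = \int_0^\infty F(x)\big(\int_x^\infty (1-F(p))f(p)\ \dd p\big)\ \dd x = \int_0^\infty F(x)\cdot\tfrac12(1-F(x))^2\ \dd x$, using $F(\infty)=1$.

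Adding the two pieces, $\E_{p\sim F}[\GFT(p,F)] = T_1 + T_2 = \tfrac12\int_0^\infty F(x)(1-F(x))\big(F(x) + (1-F(x))\big)\ \dd x = \tfrac12\int_0^\infty F(x)(1-F(x))\ \dd x = \tfrac12\,\OPTGFT(F)$, which is the claim, and manifestly independent of $F$.

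The computation itself is routine; the only points that merit care are (i) justifying the interchange of integrals — immediate by Tonelli since everything is nonnegative — and (ii) the fact that the formulas used presuppose a density $f$. For distributions with atoms I would either note that $\GFT$ and $\OPTGFT$ are weak$^*$-continuous in $F$ along a suitable approximating sequence of absolutely continuous distributions, or redo the two Stieltjes integrals $\int F\, \dd F$ directly; if $F$ has an atom the identity $\int_0^x F\,\dd F = \tfrac12 F(x)^2$ picks up a harmless correction that cancels in $T_1+T_2$, so I would flag this as the one spot needing a remark rather than a genuine obstacle.
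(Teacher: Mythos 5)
Your proposal is correct and takes essentially the same route as the paper: the same decomposition of $\E_{p\sim F}[\GFT(p,F)]$ into the two terms (your $T_1,T_2$ are the paper's $\gamma_1,\gamma_2$), arriving at the same intermediate values $\tfrac12\int_0^\infty F^2(1-F)$ and $\tfrac12\int_0^\infty F(1-F)^2$ and the same cancellation at the end. The only difference is that you evaluate the double integrals by a Tonelli interchange where the paper uses integration by parts together with a limiting argument; this is a minor (and arguably cleaner) technical variation rather than a different proof.
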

\begin{proof}
We can write $\E_{p\sim F}[\GFT(p,F)]$ as
$$
\int_0^\infty \left[ F(p)\int_p^\infty (1-F(x))\ \dd x+ (1-F(p))\int_0^p F(x) \ \dd x\right]f(p)\ \dd p.
$$
We now define $\gamma_1 := \int_0^\infty f(p) F(p)\int_p^\infty (1-F(x))\ \dd x\ \dd p$ and $\gamma_2 := \int_0^\infty f(p)(1-F(p))\int_0^p F(x)\ \dd x\ \dd p$, which we will compute separately for simplicity, since we then have $\E_{p\sim F}[\GFT(p,F)] = \gamma_1 + \gamma_2.$ Let $\lambda := \int_0^\infty (1-F(x))\ \dd x.$ Now for the first term, we have
\begin{align*}
    \gamma_1 &= \int_0^\infty f(p) F(p)\int_p^\infty (1-F(x))\ \dd x\ \dd p \\
    &=  \int_0^\infty f(p) F(p) \left(\lambda - \int_0^p (1-F(x))\ \dd x\right)\ \dd p \\
    &=  \lambda \int_0^\infty f(p) F(p)\ \dd p -  \int_0^\infty f(p) F(p)\int_0^p (1-F(x))\ \dd x\ \dd p \\
    &=  \frac{\lambda}{2} -  \int_0^\infty f(p) F(p)\int_0^p (1-F(x))\ \dd x\ \dd p \\
    &=  \frac{\lambda}{2} -  \frac{1}{2}F(p)^2 \int_0^p (1-F(x))\ \dd x\ \Bigg|_0^\infty + \frac{1}{2}\int_0^\infty  F(p)^2(1-F(p))\ \dd p \\
    &=  \frac{1}{2}\int_0^\infty  F(p)^2(1-F(p))\ \dd p,
\end{align*}
where we used that 
$$
\int_0^\infty f(x)F(x)\ \dd x = \frac{1}{2}F(x)^2\ \bigg|_0^\infty = 1/2.
$$
Note$ \int_0^a f(x)(1-F(x))\ \dd x = F(a)(1-\frac{1}{2}F(a))$ by integrating by parts. Thus for the second term we can write
\begin{align*}
    \gamma_2 &= \lim_{a\to\infty} \int_0^a f(p)(1-F(p))\int_0^p F(x)\ \dd x\ \dd p \\
    &= \lim_{a\to\infty}\left[F(a)(1-\frac{1}{2}F(a))\int_0^aF(x)\ \dd x - \int_0^a F(p)^2(1-\frac{1}{2}F(p))\ \dd p\right].
\end{align*}
Now for $\theta := F(a)(1-\frac{1}{2}F(a))$ we have $\theta\to 1/2$ as $a\to\infty.$ Thus we have
\begin{align*}
    \gamma_2 &=\lim_{a\to\infty}\left[\theta\int_0^a F(x)\ \dd x - \int_0^a F(p)^2(1-\frac{1}{2}F(p))\ \dd p\right] \\
    &=\int_0^\infty\left[\frac{1}{2} F(p) - F(p)^2\left(1-\frac{1}{2}F(p)\right)\right]\ \dd p \\
    &=\frac{1}{2}\int_0^\infty F(p)(1-F(p))^2\ \dd p.
\end{align*}
Putting things together we get
\begin{align*}
    \E_{p\sim F}[\GFT(F;p)] &= \gamma_1 + \gamma_2 \\
    &= \frac{1}{2}\int_0^\infty  F(p)^2(1-F(p))\ \dd p + \frac{1}{2}\int_0^\infty F(p)(1-F(p))^2\ \dd p  \\
    &= \frac{1}{2}\int_0^\infty  F(p)(1-F(p))[F(p) + (1-F(p))]\ \dd p  \\
    &= \frac{1}{2}\int_0^\infty  F(p)(1-F(p))\ \dd p.
\end{align*}
Recalling that $\OPTGFT(F) = \int_0^\infty (1-F(x))F(x)\ \dd x,$ we get $\alpha=1/2,$ completing the proof.
\end{proof}

\begin{lem} 
For any fixed $0\leq \mu_1\leq\mu\leq1$, $\mu>0$ and $0<\g\leq1$,
\[\adjustlimits \inf_{F\in\Delta_{L^1}(\R_+;\mu,\mu_1,\g)}\sup_{p\in\R_+}\frac{\W(p;F)}{\OPTW(F)}= \inf_{F\in\Delta_{L^1}^{(4)}(\R_+;\mu,\mu_1,\g)}\sup_{p\in\R_+}\frac{\W(p;F)}{\OPTW(F)}.\]
\end{lem}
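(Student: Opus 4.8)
The plan is to prove the non-trivial inequality
\[
\inf_{F\in\Delta_{L^1}(\R_+;\mu,\mu_1,\g)}\sup_{p}\frac{\W(p;F)}{\OPTW(F)}
\;\ge\;
\inf_{F\in\Delta_{L^1}^{(4)}(\R_+;\mu,\mu_1,\g)}\sup_{p}\frac{\W(p;F)}{\OPTW(F)},
\]
the reverse direction being immediate from the set inclusion $\Delta_{L^1}^{(4)}\subset\Delta_{L^1}$. Fix an arbitrary $F\in\Delta_{L^1}(\R_+;\mu,\mu_1,\g)$; I want to exhibit $\tilde F\in\Delta_{L^1}^{(4)}(\R_+;\mu,\mu_1,\g)$ with $\W(\mu;\tilde F)/\OPTW(\tilde F)\le \W(\mu;F)/\OPTW(F)$. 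First I would reduce to the discrete case: any $F\in\Delta_{L^1}(\R_+;\mu,\mu_1,\g)$ is an $L^1$-limit of finitely-supported distributions, and by slightly perturbing the atoms and their masses one can match $\mu$, $\mu_1=\E[S\mid S\le\mu]$ and $\g=F(\mu)$ exactly while keeping $\W(\mu;\cdot)$ and $\OPTW(\cdot)$ arbitrarily close (both are continuous in the $L^1$/weak sense on distributions with uniformly bounded support after the rescaling $F(1)=1$); a diagonal argument then lets us restrict to finitely-supported $F$. By the rescaling noted in the sketch we may also assume $\mathrm{supp}(F)\subseteq[0,1]$ and $F(1)=1$.

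The core step uses the key identity from the sketch: for any $F'\in\Delta_{L^1}(\R_+;\mu,\mu_1,\g)$ the optimal posted-price welfare is the \emph{constant} $\W(\mu;F')=\mu+(\mu-\mu_1)\g$, so it suffices to \emph{maximize} $\OPTW(F')=\E[\max\{B,S\}]=\mu+\tfrac12\E[|B-S|]$ over that class. Now I would run the mean-preserving spread operations: list the atoms of $F$ in $(0,\mu)$ and, repeatedly, take the atom nearest an endpoint (or any atom), split its mass in half and push the two halves apart at equal speed until one half reaches $0$ or $\mu$; do the analogous thing in $(\mu+\delta,1)$ for a fixed small $\delta>0$ chosen so that no mass ever crosses into $(\mu,\mu+\delta]$ (this is what keeps $F(\mu)=\g$ fixed, and one should first move any mass originally in $(\mu,\mu+\delta]$ out to $\mu+\delta$ or beyond). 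Each such move is mean-preserving within its interval, hence preserves $\E[S]=\mu$, the conditional mean $\mu_1$ on $(0,\mu]$ (masses at $0$ and at $\mu$ contribute $0$ and $\mu$ to $\E[S\,\1_{S\le\mu}]$ just as the split mass did on average), and $F(\mu)=\g$; and it weakly increases $\E[|B-S|]$ because for a fixed value of $S$ outside the interval, $|v-S|$ is convex in $v$, while for $S$ inside the interval a direct two-point computation shows the spread does not decrease $\E|B-S|$ either. Passing to the limit of these recursions (as formalized in the full proof, one directly defines the limiting distribution $\tilde F$), the mass in $(0,\mu)$ collapses onto $\{0,\mu\}$ and the mass in $(\mu,1)$ collapses onto $\{\mu+\delta,1\}$, giving a 4-point distribution $\tilde F\in\Delta_{L^1}^{(4)}(\R_+;\mu,\mu_1,\g)$ with $\W(\mu;\tilde F)=\W(\mu;F)$ and $\OPTW(\tilde F)\ge\OPTW(F)$, hence $\W(\mu;\tilde F)/\OPTW(\tilde F)\le\W(\mu;F)/\OPTW(F)$, as wanted.

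The main obstacle I anticipate is the limiting argument, on two fronts. First, making the reduction to finite support fully rigorous requires care that the three constraints $(\mu,\mu_1,\g)$ can be matched \emph{exactly} after truncation/discretization rather than only approximately — this is a small but genuine bookkeeping point, handled by leaving one or two free atoms whose location/mass can be tuned to restore the constraints with vanishing perturbation. Second, the spreading procedure is an infinite recursion, so rather than arguing convergence of the iteration I would, as the footnote suggests, define $\tilde F$ directly: the masses of $\tilde F$ at $0$ and $\mu$ are the unique values making $F(\mu^-)$-mass have mean $\mu_1$ and total $\g$ (i.e. mass $\g(1-\mu_1/\mu)$ at $0$ and $\g\mu_1/\mu$ at $\mu$), and the remaining mass $1-\g$ is split between $\mu+\delta$ and $1$ to restore $\E[S]=\mu$; then verify in one shot that this $\tilde F$ has $\OPTW(\tilde F)\ge\OPTW(F)$ by comparing $\E|B-S|$ interval by interval using the convexity/majorization fact that, among distributions on $[0,\mu]$ with prescribed mean, the two-point distribution on $\{0,\mu\}$ maximizes $\E|B-S|$ against any independent copy (and similarly on $[\mu+\delta,1]$). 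Finally one lets $\delta\to0$ if needed; since the bound on $\OPTW$ used downstream (the displayed inequality just before Lemma~\ref{lem:l-bound}) is already $\delta$-independent, this causes no difficulty.
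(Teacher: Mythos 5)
Your proposal is correct and follows essentially the same route as the paper's proof: reduce to finitely supported $F$ with $F(1)=1$, define $\tilde F$ directly on $\{0,\mu,\mu+\delta,1\}$ with the masses forced by the three constraints, use that $\W(\mu;\cdot)$ is constant on $\Delta_{L^1}(\R_+;\mu,\mu_1,\g)$, and show $\OPTW(\tilde F)\geq\OPTW(F)$ by comparing the below-$\mu$ and above-$\mu$ contributions separately (your majorization fact is exactly the paper's Cauchy--Schwarz step $q_0^2\mu=\frac{1}{\mu}[\int_0^\mu F]^2\leq\int_0^\mu F^2$), the cross term being identical. The only cosmetic difference is that you frame the spreading iteration explicitly before discarding it in favor of the direct limit construction, which is also what the paper does.
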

\begin{proof}
Fix $F\in\Delta_{L^1}(\R_+;\mu,\mu_1,\g)$. We may assume without loss of generality that $F$ is a finite discrete distribution, otherwise we may approximate $F$ arbitrarily well by such a distribution in $\Delta_{L^1}(\R_+;\mu,\mu_1,\g)$.\footnote{For instance, one may approximate $F$ by the finite discrete distributions corresponding to Riemann-sum approximations of $\E[S\,|\, S\leq\mu]$ (when restricted to $S\leq \mu$) and $\E[S\,|\, S>\mu]$ (when restricted to $S> \mu$).} 
By rescaling if necessary, we may also assume that $F(1)=1$. Let $F$ put positive probability mass only on each of the points $x_0,x_1,\ldots,x_N\in[0,1]$. 

Motivated by the procedure outlined in the proof sketch, we now construct $\tilde F\in\Delta_{L^1}^{(4)}(\R_+;\mu,\mu_1,\g)$ such that, for any $F\in\Delta_{L^1}(\R_+;\mu,\mu_1,\g)$,
\[\frac{\W(\mu;\tilde F)}{\OPTW(\tilde F)} \leq \frac{\W(\mu;F)}{\OPTW(F)}.\]
We write
\[
\tilde F(x)=q_0\cdot\1_{x\geq0}+q_1\cdot\1_{x\geq\mu}+q_2\cdot\1_{x\geq\mu+\delta}+q_3\cdot\1_{x\geq 1}.
\]
To construct $\tilde F$, we choose $q_0$, $q_1$, $q_2$, $q_3$ and $\delta$ as follows:
\begin{itemize}
\item $\delta$ is defined by
\[\delta=\min\{x_i-\mu: x_i>\mu,\,i\in\{1,2,\ldots,N\}\}.\]
\item $q_0$ and $q_1$ are defined by
\[\begin{dcases}
q_0+q_1 &=\g,\\
(q_0+q_1)\mu_1&=q_1\mu.\end{dcases}
\]
That is, $q_0=\g(1-\mu_1/\mu)$ and $q_1=\mu_1\g/\mu$.
\item If $\mu+\delta<1$, $q_2$ and $q_3$ are defined by 
\[\begin{dcases}
q_2+q_3 &=1-\g,\\
q_2(\mu+\delta)+q_3&=\mu-(q_0+q_1)\mu_1.\end{dcases}\]
That is, $q_2=(1-\g-\mu+\mu_1\g)/(1-\mu-\delta)$ and $q_3=(\mu\g-\mu_1\g-\delta+\delta\g)/(1-\mu-\delta)$.

If $\mu+\delta=1$, $q_2$ and $q_3$ are defined by $q_2=0$ and $q_3=1-\g$.
\end{itemize}

With $\tilde F$ defined above, we first verify that $\tilde F\in\Delta_{L^1}^{(4)}(\R_+;\mu,\mu_1,\g)$. Indeed:
\[\begin{dcases}
\tilde F(\mu)=q_0+q_1 &=\g,\\
\E_{\tilde F}[S] = q_1\mu+q_2(\mu+\delta)+q_3 &= \mu,\\
\E_{\tilde F}[S\,|\, S\leq \mu] = \frac{q_1\mu}{q_0+q_1} &=\mu_1.
\end{dcases}\]

Next, we show that, for any $F\in \Delta_{L^1}(\R_+;\mu,\mu_1,\g)$,
\[\OPTW(\tilde F)\geq \OPTW(F).\]
To do so, we compute that
\begin{align*}
\OPTW(F) 
&=\E_F[\max\{B,S\}]\\
&=\E_{F}[\max\{B,S\}\cdot\1_{B,S\leq\mu}]+\E_{F}[\max\{B,S\}\cdot\1_{B,S>\mu}]\\
&\qquad+\E_{F}[\max\{B,S\}\cdot\1_{\max\{B,S\}>\mu}\1_{\min\{B,S\}\leq\mu}].
\end{align*}
That is, $B,S$ must be either: (i) both no greater than $\mu$; (ii) both greater than $\mu$; or (iii) on different sides of $\mu$. For the last case, observe that:
\[\E_{F}[\max\{B,S\}\cdot\1_{\max\{B,S\}>\mu}\1_{\min\{B,S\}\leq\mu}]=\E_{\tilde F}[\max\{B,S\}\cdot\1_{\max\{B,S\}>\mu}\1_{\min\{B,S\}\leq\mu}].\]
Therefore, it suffices to show that $\E_{F}[\max\{B,S\}\cdot\1_{B,S\leq\mu}]\leq \E_{\tilde F}[\max\{B,S\}\cdot\1_{B,S\leq\mu}]$ and that $\E_{F}[\max\{B,S\}\cdot\1_{B,S>\mu}]\leq\E_{\tilde F}[\max\{B,S\}\cdot\1_{B,S>\mu}]$.

To show the former result, observe that
\[\begin{dcases}
\E_{F}[\max\{B,S\}\cdot\1_{B,S\leq\mu}]=2\int_0^\mu x\,F(x)\ \dd F(x)&=\mu\g^2-\int_0^\mu [F(x)]^2\ \dd x,\\
\E_{\tilde F}[\max\{B,S\}\cdot\1_{B,S\leq\mu}]=(q_0+q_1)^2\mu-q_0^2\mu&=\mu\g^2-q_0^2\mu.\end{dcases}\]
However, from the definition of $q_0$:
\[\int_0^\mu F(x)\ \dd x=\mu\g-\int_0^\mu x\ \dd F(x)=(\mu-\mu_1)\g=q_0\mu.\]
Therefore
\[q_0^2\mu=\frac1\mu\left[\int_0^\mu F(x)\ \dd x\right]^2\leq\int_0^\mu[F(x)]^2\ \dd x.\]
This establishes that $\E_{F}[\max\{B,S\}\cdot\1_{B,S\leq\mu}]\leq \E_{\tilde F}[\max\{B,S\}\cdot\1_{B,S\leq\mu}]$. The latter result follows by a similar argument.

We have thus proved that $\OPTW(\tilde F)\geq\OPTW(F)$. Moreover, since $\W(\mu;\tilde F)=\W(\mu; F)$, we have
\[\frac{\W(p;\tilde F)}{\OPTW(\tilde F)}\leq\frac{\W(p;F)}{\OPTW(F)}.\]
Since $\tilde F\in\Delta_{L^1}^{(4)}(\R_+;\mu,\mu_1,\g)\subset\Delta_{L^1}(\R_+;\mu,\mu_1,\g)$, hence this establishes:
\[\adjustlimits \inf_{F\in\Delta_{L^1}(\R_+;\mu,\mu_1,\g)}\sup_{p\in\R_+}\frac{\W(p;F)}{\OPTW(F)}= \inf_{F\in\Delta_{L^1}^{(4)}(\R_+;\mu,\mu_1,\g)}\sup_{p\in\R_+}\frac{\W(p;F)}{\OPTW(F)}.\]
\end{proof}

\begin{lem}
The value of the minimax problem \eqref{eq:M} is bounded below:
\[\adjustlimits \inf_{F\in\Delta_{L^1}(\R_+)}\sup_{p\in\R_+}\frac{\W(p;F)}{\OPTW(F)}\geq\frac{2+\sqrt{2}}{4}.\]
\end{lem}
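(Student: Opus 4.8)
The plan is to resume exactly where the main text leaves off. By Lemma~\ref{lem:dimensionality} together with the explicit $4$-point parametrization of $\Delta_{L^1}^{(4)}(\R_+;\mu,\mu_1,\g)$ (and rescaling so that the support lies in $[0,1]$, which forces $\mu\leq1$), it suffices to prove the three-variable inequality
\[\frac{\mu+(\mu-\mu_1)\g}{\mu_1\g^2\(2-\frac{\mu_1}{\mu}\)+2\g\(\mu-\g\mu_1\)+\(1-\g\)^2-\frac{\(1-\g-\mu+\mu_1\g\)^2}{1-\mu}}\ \geq\ \frac{2+\sqrt2}{4}\]
for all $\mu\in(0,1)$, $\mu_1\in[0,\mu]$ and $\g\in(0,1]$; the boundary case $\mu=1$ forces a point mass at $1$ and gives ratio $1$. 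Write $D$ for the denominator. For a $4$-point $F$ it was shown that $D\geq\OPTW(F)>0$ and that the numerator equals $\W(\mu;F)$, so — using $\frac{4}{2+\sqrt2}=4-2\sqrt2$ — the inequality is equivalent to $\phi(\mu):=(4-2\sqrt2)\bigl(\mu+(\mu-\mu_1)\g\bigr)-D\ \geq\ 0$. The first move is a normalization: I would write $\mu_1=\beta\mu$ with $\beta\in[0,1]$ and introduce the single variable $u:=(1-\beta)\g\in[0,1]$, together with $a:=1-\g$ and $c:=1-\beta\g$; one checks $c=a+u$, and the numerator collapses to $(1+u)\mu$.

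The decisive step is to clear the only nonlinear term, $\frac{(a-c\mu)^2}{1-\mu}$. Multiplying $\phi(\mu)$ through by $1-\mu>0$, expanding $(a-c\mu)^2$, and collecting powers of $\mu$, the fraction cancels and one should arrive at
\[(1-\mu)\,\phi(\mu)\;=\;\mu\Bigl[\,\bigl(A+a^2-2ac\bigr)+\mu\bigl(c^2-A\bigr)\,\Bigr],\]
where $A$ is a constant in $\mu$ built from $\beta$ and $\g$. The bracket is \emph{affine} in $\mu$, hence over $[0,1]$ is minimized at an endpoint. At $\mu=1$ it equals $a^2-2ac+c^2=(a-c)^2=u^2\geq0$. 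At $\mu=0$ it equals $A+a^2-2ac$; using $c=a+u$ to rewrite $a^2-2ac=-a^2-2au$ and then expanding in $\beta$ and $\g$, I expect the identity
\[\beta\g^2(2-\beta)+2\g(1-\beta\g)+a^2+2au\;=\;1+2u-u^2\]
to emerge, which gives
\[A+a^2-2ac\;=\;(4-2\sqrt2)(1+u)-\bigl(1+2u-u^2\bigr)\;=\;u^2-2(\sqrt2-1)\,u+(\sqrt2-1)^2\;=\;\bigl(u-(\sqrt2-1)\bigr)^2\ \geq\ 0.\]
Since the affine bracket is nonnegative at both endpoints, it is nonnegative throughout $[0,1]$, so $(1-\mu)\phi(\mu)\geq0$; dividing by $1-\mu>0$ yields $\phi(\mu)\geq0$, which (together with $D>0$) is exactly the desired bound.

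The only genuine computation — and hence the one place I would be careful — is the expansion in the second paragraph: verifying the displayed identity, equivalently that the quadratic in $u$ has vanishing discriminant, $(2-2\sqrt2)^2-4(3-2\sqrt2)=0$, and that $3-2\sqrt2=(\sqrt2-1)^2$. I do not foresee a conceptual obstacle; the risk is purely arithmetic. Two consistency checks help: equality in the final bound forces $u=\sqrt2-1$, i.e.\ at the $\mu\to0$ endpoint $\g\to1$ and $\beta\to2-\sqrt2$, which is precisely the regime of the extremal sequence $\{F_n\}$ given after Theorem~\ref{thm:main}; and directly, the $\mu\to0$ limit of the ratio equals $\frac{1+u}{1+2u-u^2}$, whose minimum over $u\in[0,1]$ is attained at $u=\sqrt2-1$ with value $\frac{\sqrt2}{4(\sqrt2-1)}=\frac{2+\sqrt2}{4}$.
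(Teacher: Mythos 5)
Your proposal is correct and follows essentially the same route as the paper's proof: both reduce to the three-variable ratio, absorb $\mu_1$ and $\g$ into the single variable $u=\g(1-\mu_1/\mu)$ (the paper's $y=\g x$), observe that the dependence on $\mu$ is benign so the worst case is $\mu\to0$, and finish with the same one-dimensional quadratic $(u-(\sqrt2-1))^2\geq0$. The only difference is cosmetic — you clear the factor $1-\mu$ and check the endpoints of an affine function of $\mu$, whereas the paper differentiates the ratio in $\mu$ to show monotonicity — and your key identity $A'+a^2+2au=1+2u-u^2$ does check out.
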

\begin{proof}
From the argument given in the paper, it suffices to show that
\[\inf_{\substack{\mu\,\mu_1\,\g\in\R_+\\0\leq\mu_1\leq\mu;\,\mu,\g\in(0,1]}}\frac{\mu+(\mu-\mu_1)\g}{\mu_1\g^2(2-\frac{\mu_1}{\mu})+2\g(\mu-\g\mu_1)+(1-\g)^2-\frac{(1-\g-\mu+\mu_1\g)^2}{1-\mu}}=\frac{2+\sqrt{2}}{4}.\]
We begin with a change of variables. Define $x:=1-\mu_1/\mu\in[0,1]$; the minimization problem thus becomes
\[\inf_{\substack{x\in[0,1]\\\mu,\g\in(0,1]}}\frac{1+\g x}{1+2\g x-\frac{\g^2x^2}{1-\mu}}.\]
Observe that, for $x\in[0,1]$ and $\mu,\g\in(0,1]$,
\[\frac{\pp}{\pp \mu}\left[\frac{1+\g x}{1+2\g x-\frac{\g^2x^2}{1-\mu}}\right]=\frac{\g^2x^2(1+\g x)}{\left[1-\mu+\g x(2-2\mu-\g x)\right]^2}\geq 0.\]
That is, the objective function is non-decreasing in $\mu$. 

Thus the infimum of the objective function is achieved in the limit as $\mu\to0$:
\[\inf_{\substack{x\in[0,1]\\\mu,\g\in(0,1]}}\frac{1+\g x}{1+2\g x-\frac{\g^2x^2}{1-\mu}}=\inf_{\substack{x\in[0,1]\\\g\in(0,1]}}\frac{1+\g x}{1+\g x(2-\g x)}.\]
We now make another change of variables. Define $y:=\g x\in[0,1]$. The minimization problem thus becomes a single-dimensional minimization problem:
\[\inf_{\substack{x\in[0,1]\\\g\in(0,1]}}\frac{1+\g x}{1+\g x(2-\g x)}=\inf_{y\in[0,1]}\frac{1+y}{1+y(2-y)}=\frac{2+\sqrt{2}}{4}.\]
\end{proof}


\begin{prop}[Proposition \ref{prop:3/4} in the main text]
Given distributions $F_S$ and $F_B$ that satisfy $\E[(S-B)_+]=0$, the designer can always select a price that achieves at least $3/4$ of the total expected welfare under the first-best efficient outcome. That is:
\[\inf_{\substack{F_S,F_B\in\Delta_{L^1}(\R_+)\\ \E[(S-B)_+]=0}}\sup_{p\in\R_+}\frac{\W(p;F_S,F_B)}{\OPT(F_S,F_B)}\geq \frac34.\]
\end{prop}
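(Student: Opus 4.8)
The plan is to first extract structure from the hypothesis $\E[(S-B)_+]=0$. Because $B$ and $S$ are independent, $\E[(S-B)_+]=0$ is equivalent to $S\le B$ almost surely, and for independent random variables this forces $\operatorname{ess\,sup}(S)\le\operatorname{ess\,inf}(B)$ --- otherwise there would be a rectangle of positive probability on which $S>B$. Let $t$ denote this common threshold; it is finite since $\E[B]<\infty$. Thus $0\le S\le t\le B$ almost surely, and in particular $\OPT(F_S,F_B)=\E[\max\{B,S\}]=\E[B]$. Write $\beta:=\pr[B=t]$ and $\sigma:=\pr[S=t]$, and record the two elementary facts $\E[S]=\E[S\1_{S<t}]+t\sigma\ge t\sigma$ and $\E[B]\ge t$.

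Next I would prove the lower bound $\sup_{p}\W(p;F_S,F_B)/\OPT(F_S,F_B)\ge 3/4$ using only two candidate prices: $p=t$, and $p=t^-$ (the limit of $\W(p;F_S,F_B)$ as $p\uparrow t$, which exists by monotone convergence since the integrand $\1_{B>p\ge S}$ is nondecreasing in $p$ on $[0,t)$ and nonnegative). A direct computation using $S\le t\le B$ a.s.\ and independence gives
\[\W(t;F_S,F_B)=\E[B]-\bigl(t-\E[S]\bigr)\beta\AND\W(t^-;F_S,F_B)=\E[B]-\bigl(\E[B]-t\bigr)\sigma.\]
Hence $\sup_{p}\W(p;F_S,F_B)\ge\E[B]-\min\bigl\{(t-\E[S])\beta,\ (\E[B]-t)\sigma\bigr\}$, and it remains to bound the minimum by $\tfrac14\E[B]$. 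Using $t-\E[S]\le t(1-\sigma)$, it is at most $\min\{t\beta(1-\sigma),\ (\E[B]-t)\sigma\}$; taking the weighted average of these two numbers with weights $\E[B]-t$ and $t\beta$ (whose sum $t\beta+(\E[B]-t)=\E[B]-t(1-\beta)$ is positive and at most $\E[B]$) collapses to $t\beta(\E[B]-t)$, so the minimum is at most $\frac{t\beta(\E[B]-t)}{t\beta+(\E[B]-t)}\le\frac{t\beta+(\E[B]-t)}{4}\le\frac{\E[B]}{4}$ by the AM--GM inequality. This gives the claimed $3/4$ lower bound.

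For tightness I would exhibit the sequence $F_S$ with $S\in\{0,1\}$ each with probability $\tfrac12$, and $F_{B,n}$ with $B=1$ with probability $1-\tfrac1n$ and $B=n$ with probability $\tfrac1n$. Then $\E[(S-B)_+]=0$, $\OPT=\E[B]=2-\tfrac1n$, and a short case analysis shows that every price $p\in[0,n)$ yields $\W(p;F_S,F_{B,n})=\tfrac32-\tfrac1{2n}$ --- for $p\in[0,1)$ trade occurs exactly when $S=0$, and for $p\in[1,n)$ exactly when $B=n$, and both cases give the same welfare --- while $p\ge n$ yields welfare $\tfrac12$. Therefore $\sup_{p}\W(p;F_S,F_{B,n})/\OPT=\frac{3n-1}{4n-2}\to\tfrac34$, so the infimum in the statement equals $3/4$.

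The main obstacle is the structural reduction in the first step: recognizing that $\E[(S-B)_+]=0$ (plus independence) collapses the two supports to opposite sides of a single point $t$, which simultaneously pins down $\OPT=\E[B]$ and makes the two-price argument natural. The rest --- the closed forms for $\W(t)$ and $\W(t^-)$, the weighted-average/AM--GM estimate that converts the two loss terms $(t-\E[S])\beta$ and $(\E[B]-t)\sigma$ into a quantity bounded by $\tfrac14\E[B]$, and the verification of the extremal sequence --- is routine once this is in place.
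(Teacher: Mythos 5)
Your proof is correct, and it reaches the same core insight as the paper's — that near the point separating the two supports there are two candidate prices whose respective losses are ``complementary'' products that cannot both exceed a quarter of $\E[B]$ — but the execution is genuinely different. The paper routes Proposition~\ref{prop:3/4} through the quantitative Lemma~\ref{lem:lem-1}: it perturbs a separating price $p^*$ into $p^{\pm}=p^*\pm\frac12\sqrt{10\alpha}\cdot\OPT$, bounds the loss of one of them by $\left(\frac14+2\sqrt{10\alpha}\right)\OPT$ via the case split $\beta\sigma\le\frac14$ or $(1-\beta)(1-\sigma)\le\frac14$, and takes $\alpha\to0$; this machinery is built for reuse in Theorem~\ref{thm:small}. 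You instead work exactly at $\alpha=0$: the observation that independence plus $\E[(S-B)_+]=0$ forces $\operatorname{ess\,sup}(S)\le\operatorname{ess\,inf}(B)$ gives a threshold $t$ with $S\le t\le B$ a.s.\ and $\OPT=\E[B]$, the closed forms $\W(t)=\E[B]-(t-\E[S])\beta$ and $\W(t^-)=\E[B]-(\E[B]-t)\sigma$ are exact, and your weighted-average/AM--GM step $\min\{t\beta(1-\sigma),(\E[B]-t)\sigma\}\le\frac{t\beta(\E[B]-t)}{t\beta+(\E[B]-t)}\le\frac{\E[B]}{4}$ replaces the paper's case analysis; all of this checks out (the only degenerate case, $t\beta+(\E[B]-t)=0$, forces $\E[B]=0$ and is excluded by the standing nonzero-mean assumption, and one should note that if $\operatorname{ess\,sup}(S)<\operatorname{ess\,inf}(B)$ strictly then any intermediate price already gives ratio $1$). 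A genuine bonus of your write-up is the explicit tightness sequence $S\in\{0,1\}$ uniform, $B\in\{1,n\}$, verifying the stated equality; the paper's appendix proof only establishes the lower bound and leaves the matching upper bound to an allusion in the main text. The trade-off is that your argument is specific to $\E[(S-B)_+]=0$ and does not by itself yield the robust $\left(\frac34-2\sqrt{10\alpha}\right)$ guarantee needed later.
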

\begin{proof}
The result to Proposition~\ref{prop:3/4} follows from Lemma~\ref{lem:lem-1}, which we prove below.
\end{proof}

\begin{lem} \label{lem:lem-1}
Fix $\alpha>0$. Given distributions $F_S$ and $F_B$ that satisfy $\E[(S-B)_+] \leq \alpha\cdot \OPT(F_S,F_B)$. 
Suppose there exist $p^+$ and $p^-$ such that:
\begin{enumerate}[label=(\roman*)]
\item $\pr[S> p^+]\leq \sqrt{10\alpha}$;
\item $\pr[B< p^-]\leq \sqrt{10\alpha}$; and
\item $0<p^+-p^-\leq\sqrt{10\alpha}$.
\end{enumerate}
Then the designer can set the price to be either $p^+$ or $p^-$ to achieve an expected welfare of at least 
\[\left(\frac34 - 2 \sqrt{10 \alpha}\right)\cdot \OPT(F_S,F_B).\]
\end{lem}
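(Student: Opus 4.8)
The plan is to split the analysis according to where $B$ and $S$ lie relative to the band $[p^-, p^+]$, and compare the welfare lost by posting $p^+$ (or $p^-$) against $\OPT(F_S,F_B)$. First, since $\E[(S-B)_+]=0$ is essentially the regime of interest (the general bound $\E[(S-B)_+]\le\alpha\cdot\OPT$ allows a small slack), I would begin by writing
\[
\OPT(F_S,F_B)=\E[\max\{B,S\}]=\E[S]+\E[(B-S)_+],
\]
and similarly $\W(p;F_S,F_B)=\E[S]+\E[(B-p)\1_{B>p\ge S}]$. The welfare deficit is therefore $\OPT-\W(p)=\E[(B-S)_+]-\E[(B-p)\1_{B>p\ge S}]$, so it suffices to lower-bound $\E[(B-p)\1_{B>p\ge S}]$ relative to $\E[(B-S)_+]$ for at least one of $p\in\{p^+,p^-\}$. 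The key structural point is that under conditions (i)--(iii), the ``bad'' events---where trade fails to occur although $B>S$, or occurs at a price far from the efficient surplus---are confined to the low-probability tails $\{S>p^+\}$, $\{B<p^-\}$, and the thin band $[p^-,p^+]$, each of which has probability or width $O(\sqrt{\alpha})$.

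The key steps, in order: (1) Decompose $\E[(B-S)_+]$ into the contribution from the event $E_0:=\{S\le p^-\}\cap\{B\ge p^+\}$ and the remainder; on $E_0$, posting either $p^+$ or $p^-$ induces trade, and the surplus captured, $\E[(B-p)\1_{E_0}]$, differs from $\E[(B-S)\1_{E_0}]$ by at most $(p^+-p^-)\cdot\pr[E_0]\le\sqrt{10\alpha}$ (posting $p^-$) or by the analogous bound after swapping roles. (2) Bound the ``leftover'' contribution $\E[(B-S)_+\1_{E_0^c}]$: since $E_0^c\subseteq\{S>p^-\}\cup\{B<p^+\}$, and by the band conditions $\{S>p^-\}\subseteq\{S>p^+\}\cup\{S\in[p^-,p^+]\}$, etc., this splits into terms controlled by $\pr[S>p^+]\le\sqrt{10\alpha}$, $\pr[B<p^-]\le\sqrt{10\alpha}$, and the band term; each is multiplied by a conditional expectation of $B-S$ that one controls via Cauchy--Schwarz or by comparing to $\OPT$. (3) Optimize the choice of $p^+$ vs.\ $p^-$: whichever price one posts, the ``$3/4$'' arises because on $E_0$ one of the two prices is guaranteed to sit in the favorable half—this is exactly the mechanism behind the $3/4$ bound in the symmetric single-sample result (Theorem~\ref{welfare-3/4}), and I would mirror that averaging/worst-case argument here. (4) Collect all error terms, each $O(\sqrt{\alpha})\cdot\OPT$, to conclude $\W(p)\ge(3/4-2\sqrt{10\alpha})\cdot\OPT$ for the better of the two prices.

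The main obstacle I anticipate is step (2)--(3): bounding $\E[(B-S)_+\1_{E_0^c}]$ cleanly in terms of $\OPT$. The difficulty is that $B-S$ can be large on events of small probability, so a crude $\pr[\cdot]\le\sqrt{10\alpha}$ bound is not enough—one needs to pair the small probability with the fact that the expected magnitude of $\max\{B,S\}$ on these events is itself bounded by a constant multiple of $\OPT=\E[\max\{B,S\}]$, i.e.\ an argument of the form $\E[X\1_A]\le\sqrt{\pr[A]}\cdot\sqrt{\E[X^2]}$ won't directly work since we only control $\E[X]$, not $\E[X^2]$; instead one must argue that on, say, $\{S>p^+\}$ the variable $S$ (hence $\max\{B,S\}$) contributes at most $\pr[S>p^+]\cdot\E[S\mid S>p^+]$, and relate $\E[S\1_{S>p^+}]$ back to $\OPT$ via a suitable choice of $p^+$ as a quantile. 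Pinning down the precise constant $2\sqrt{10\alpha}$ will require carefully tracking which of these terms get absorbed by posting $p^+$ versus $p^-$ and taking the max; I expect the factor $10$ inside the radical to be exactly what makes the bookkeeping close.
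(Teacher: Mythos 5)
There is a genuine gap at the heart of your plan: you never identify where the constant $3/4$ actually comes from, and the step you delegate to ``the same mechanism as Theorem~\ref{welfare-3/4}'' cannot be carried out as described. The paper's argument is this: write $\LOSS(p)=\OPT-\W(p)$ and observe that, up to the three genuinely small terms (the two tails $\{S>p^+\}$, $\{B<p^-\}$ and the event that \emph{both} $B$ and $S$ lie in the band $[p^-,p^+]$, each contributing $O(\sqrt{\alpha})\cdot\OPT$), the loss of $p^+$ is essentially $\E[B\cdot\1_{B<p^+}\1_{S<p^-}]$ and the loss of $p^-$ is essentially $\E[B\cdot\1_{B\geq p^+}\1_{S\geq p^-}]$. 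Setting $\beta=\E[B\cdot\1_{B<p^+}]/\E[B]$ and $\sigma=\pr[S<p^-]$, independence factors these as $\beta\sigma\cdot\E[B]$ and $(1-\beta)(1-\sigma)\cdot\E[B]$, and $\min\{\beta\sigma,(1-\beta)(1-\sigma)\}\leq 1/4$. That product trick is the entire source of the $3/4$. Your decomposition instead isolates $E_0=\{S\le p^-\}\cap\{B\ge p^+\}$ (on which, incidentally, there is \emph{no} loss at all, since welfare counts $B-S$, not $B-p$; your formula $\W(p)=\E[S]+\E[(B-p)\1_{B>p\ge S}]$ is wrong --- the price is a transfer) and then hopes to bound the entire contribution of $E_0^c$ by $O(\sqrt{\alpha})\cdot\OPT$. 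That is impossible: the events ``both values below the band'' and ``both values above the band'' sit inside $E_0^c$, are not controlled by conditions (i)--(iii), and can each individually carry welfare comparable to $\OPT$ (e.g.\ $B$ concentrated just below $p^+$ with $S$ near $0$). Only the \emph{minimum} of the two, over the choice of price, is at most $\OPT/4$.

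Your anticipated obstacle is also misdiagnosed. The tail terms are handled painlessly: $\E[B\cdot\1_{B>p^+}\1_{S>p^+}]=\E[B\cdot\1_{B>p^+}]\cdot\pr[S>p^+]\le\E[B]\cdot\sqrt{10\alpha}\le\sqrt{10\alpha}\cdot\OPT$ by independence, with no need for second moments, quantile choices of $p^+$ (which is given, not chosen), or Cauchy--Schwarz. And the only band term that needs to be small is the one where $p^-\le S<B\le p^+$, where $B-S\le p^+-p^-$ pointwise; events where only one of $B,S$ lands in the band are absorbed into the $\beta,\sigma$ bookkeeping above. To repair the proposal you would need to replace steps (2)--(3) with the independence/product argument.
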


\begin{rem}
We note that Lemma~\ref{lem:lem-1} implies the result of Proposition~\ref{prop:3/4} as follows. Given distributions $F_S$ and $F_B$ that satisfy $\E[(S-B)_+]=0$, there exists a price $p^*$ for which $\pr[S\leq p^*]=\pr[B\geq p^*]=1$. Fix $\alpha>0$. Define
\[\begin{dcases}
p^+ := p^* + \frac12 \sqrt{10 \alpha} \cdot\OPT(F_S,F_B),\\
p^- := p^* - \frac12 \sqrt{10 \alpha} \cdot\OPT(F_S,F_B).
\end{dcases}\]
Observe that $\pr[S> p^+]=\pr[B< p^-]=0<\sqrt{10\alpha}$ and $p^+-p^-=\sqrt{10\alpha}$, so Lemma~\ref{lem:lem-2} applies for any $\alpha>0$. The result of Proposition~\ref{prop:3/4} is obtained in the limit as $\alpha\to0$. 
\end{rem}

\begin{proof}
For ease of notation, given a fixed-price mechanism $p$, denote the welfare loss relative to first-best efficiency by $\LOSS(p;F_S,F_B)$:
\[ \LOSS(p;F_S,F_B) := \OPT(F_S,F_B)-\W(p;F_S,F_B)=\E[(B-S)\cdot \1_{S<B \leq p}] + \E[(B-S)\cdot \1_{p<S<B}].\]
We begin by analyzing the welfare loss for $p^+$ and $p^-$. We have:
\[\begin{dcases}
\LOSS(p^+;F_S,F_B) & = \E[(B-S)\cdot \1_{S<B \leq p^+}] + \E[(B-S)\cdot \1_{p^+<S<B}] \\
 & \leq \E[B\cdot \1_{B<p^+} \1_{S<p^-}] + \E[(B-S)\cdot \1_{p^- \leq S < B \leq p^+}] + \E[B\cdot \1_{p^+<B} \1_{p^+<S}],\\
\LOSS(p^-;F_S,F_B) & = \E[(B-S)\cdot \1_{S<B \leq p^-}] + \E[(B-S)\cdot \1_{p^-<S<B}] \\
 & \leq \E[B\cdot \1_{B<p^-} \1_{S<p^-}] + \E[(B-S)\cdot \1_{p^- \leq S < B \leq p^+}] + \E[B\cdot \1_{B \geq p^+} \1_{S \geq p^-}].
 \end{dcases}\]

Consider the terms $\E[B\cdot \1_{B<p^+} \1_{S<p^-}]$ and $\E[B\cdot \1_{B \geq p^+} \1_{S \geq p^-}]$, which could potentially be large (i.e., close to the value of $\OPT(F_S,F_B)$).
However, they cannot be both large. Indeed, let $\beta:=\E[B\cdot \1_{B<p^+}]/\E[B]$ and $\sigma:=\pr[S<p^-]$. Because of the independence between $B$ and $S$,
we can write $\E[B\cdot \1_{B<p^+} \1_{S<p^-}] = \beta \sigma \cdot\E[B]$ and $\E[B \cdot\1_{B>p^+} \1_{S>p^-}]= (1-\beta) (1-\sigma)\cdot \E[B]$.
We distinguish between two cases: $\beta + \sigma \leq 1$ or $\beta + \sigma > 1$. In the first case, setting a price $p^+$ yields  
$$\E[B\cdot \1_{B<p^+} \1_{S<p^-}] \leq \beta \sigma \cdot\E[B] \leq \frac14 \cdot\E[B] \leq \frac14\cdot \OPT(F_S,F_B).$$
In the second case, setting a price $p^-$ yields
$$ \E[B\cdot \1_{B \geq p^+} \1_{S \geq p^-}] \leq (1-\beta) (1-\sigma)\cdot \E[B] \leq \frac14\cdot \E[B] \leq \frac14\cdot \OPT(F_S,F_B).$$
The remaining terms can be bounded by $\calO(\sqrt{\alpha}) \cdot\OPT(F_S,F_B)$ as follows: 
\[\begin{dcases}
 \E[(B-S) \cdot\1_{p^- \leq S < B \leq p^+}] \leq  p^+ - p^- &\leq \sqrt{10 \alpha} \cdot\OPT(F_S,F_B), \\
\E[B\cdot \1_{p^+<B} \1_{p^+<S}] \leq  \E[B] \cdot \pr[p^+ < S] &\leq \sqrt{10 \alpha} \cdot\OPT(F_S,F_B), \\
\E[B \cdot\1_{B<p^-} \1_{S<p^-}] \leq  \E[B] \cdot \pr[B<p^-] &\leq \sqrt{10 \alpha} \cdot\OPT(F_S,F_B).
\end{dcases}\]
Therefore, in either case, the welfare loss is bounded above by $\left(\frac14+2\sqrt{10\alpha}\right)\cdot \OPT(F_S,F_B)$. Thus
\[
\max_{p\in\{p^+,\,p^-\}}\W(p;F_S,F_B)\geq\left(\frac34-2\sqrt{10\alpha}\right)\cdot\OPT(F_S,F_B).
\]
\end{proof}

We now derive a sufficient condition for the hypotheses of Lemma~\ref{lem:lem-1} to hold:
\begin{lem}\label{lem:lem-2}
Fix $\alpha>0$. Given distributions $F_S$ and $F_B$ that satisfy $\E[(S-B)_+] \leq \alpha\cdot\OPT(F_S,F_B)$, suppose there exists $p^*$ such that 
\[
\pr[S \geq p^*] > \frac{1}{5}\AND\pr[B \leq p^*] > \frac{1}{5}.
\] 
Define
\[\begin{dcases}
p^+ := p^* + \frac12 \sqrt{10 \alpha} \cdot\OPT(F_S,F_B),\\
p^- := p^* - \frac12 \sqrt{10 \alpha} \cdot\OPT(F_S,F_B).
\end{dcases}\]
Then $\pr[S \geq p^+] \leq \sqrt{10 \alpha}$ and $\pr[B \leq p^-] \leq \sqrt{10 \alpha}$.
\end{lem}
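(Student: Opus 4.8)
The plan is to prove the two claimed bounds separately, each by contradiction, exploiting only the independence of $B$ and $S$ together with the hypothesis that $\E[(S-B)_+]$ is small. The underlying idea is that if, say, $\pr[S\geq p^+]$ were too large, then combined with the assumption $\pr[B\leq p^*]>1/5$ there would be a non-negligible probability of the ``very misordered'' event $\{S\geq p^+\}\cap\{B\leq p^*\}$, on which $S-B$ is bounded below by the gap $p^+-p^*$; this would force $\E[(S-B)_+]$ above $\alpha\cdot\OPT(F_S,F_B)$.

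Concretely, first I would establish $\pr[S\geq p^+]\leq\sqrt{10\alpha}$. Suppose instead $\pr[S\geq p^+]>\sqrt{10\alpha}$. On the event $\{S\geq p^+\}\cap\{B\leq p^*\}$ we have $S-B\geq p^+-p^*=\tfrac12\sqrt{10\alpha}\cdot\OPT(F_S,F_B)>0$, so $(S-B)_+=S-B$ there, and by independence of $B$ and $S$,
\[
\E[(S-B)_+]\geq (p^+-p^*)\cdot\pr[S\geq p^+]\cdot\pr[B\leq p^*]>\tfrac12\sqrt{10\alpha}\cdot\OPT(F_S,F_B)\cdot\sqrt{10\alpha}\cdot\tfrac15=\alpha\cdot\OPT(F_S,F_B),
\]
contradicting the standing hypothesis $\E[(S-B)_+]\leq\alpha\cdot\OPT(F_S,F_B)$. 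The bound $\pr[B\leq p^-]\leq\sqrt{10\alpha}$ is entirely symmetric: assuming $\pr[B\leq p^-]>\sqrt{10\alpha}$, on $\{S\geq p^*\}\cap\{B\leq p^-\}$ one has $S-B\geq p^*-p^-=\tfrac12\sqrt{10\alpha}\cdot\OPT(F_S,F_B)$, and the same computation—now using $\pr[S\geq p^*]>\tfrac15$—yields $\E[(S-B)_+]>\alpha\cdot\OPT(F_S,F_B)$, again a contradiction.

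There is essentially no obstacle here. The only points requiring care are that the two strict inequalities in the hypotheses ($\pr[S\geq p^*],\pr[B\leq p^*]>1/5$ and the hypothetical $>\sqrt{10\alpha}$) combine to give a \emph{strict} violation of $\E[(S-B)_+]\leq\alpha\cdot\OPT(F_S,F_B)$, and that the numerical constant $10$ is exactly what makes $\tfrac12\cdot\tfrac15\cdot(\sqrt{10\alpha})^2=\alpha$. No assumption on $F_S,F_B$ beyond independence and the finite-mean condition implicit in $\Delta_{L^1}(\R_+)$ is used.
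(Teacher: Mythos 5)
Your proof is correct and follows essentially the same route as the paper's: a contradiction argument lower-bounding $\E[(S-B)_+]$ by $(p^+-p^*)\cdot\pr[S\geq p^+]\cdot\pr[B\leq p^*]$ (and symmetrically for $p^-$), using independence and the fact that $\tfrac12\cdot\tfrac15\cdot(\sqrt{10\alpha})^2=\alpha$. No differences worth noting.
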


\begin{proof}
Suppose to the contrary that $\pr[S \geq p^+] > \sqrt{10 \alpha}$.  Since $\pr[B \leq p^*] > 1/5$ by definition of $p^*$, we have \[\E[(S-B)_+] \geq (p^+-p^*) \cdot\pr[S \geq p^+]\cdot \pr[B \leq p^*]> \alpha \cdot\OPT(F_S,F_B),\quad\text{a contradiction}.\]
Similarly, if $\pr[B \leq p^-] > \sqrt{10 \alpha}$, then since $\pr[S \geq p^*] > 1/5$ by definition of $p^*$, we have \[\E[(S-B)_+] \geq (p^*-p^-)\cdot \pr[S \geq p^*]\cdot \pr[B \leq p^-] > \alpha \cdot\OPT(F_S,F_B),\quad\text{a contradiction}.\]
\end{proof}

Therefore, for small $\alpha>0$, Lemma~\ref{lem:lem-2} guarantees a strict improvement over the $1-1/e$ bound (i.e., that Lemma~\ref{lem:lem-1} applies) if there exists $p^*$ such that 
\[\pr[S \geq p^*] > \frac15\AND\pr[B \leq p^*] > \frac15.\]
What if such a $p^*$ does not exist? The following result ensures that we can still guarantee a strict improvement over the $1-1/e$ bound: 

\begin{lem}\label{lem:lem-3}
Given distributions $F_S$ and $F_B$, let $p^* := \inf \{p: \pr[S \geq p] \leq 1/5 \}$.
If $\pr[B \leq p^*] \leq 1/5$, then
\[\W(p^*;F_S,F_B) \geq \frac{17}{25}\cdot\OPT(F_S,F_B).\]
\end{lem}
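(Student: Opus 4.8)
\textbf{Proof plan for Lemma~\ref{lem:lem-3}.}

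The plan is to exploit the defining property of $p^*$: just below $p^*$ the seller's upper-tail probability exceeds $1/5$, and at $p^*$ it is at most $1/5$; combined with the hypothesis $\pr[B\leq p^*]\leq 1/5$, this means that at the price $p^*$ both ``bad'' events are rare --- the seller rarely sits above $p^*$ and the buyer rarely sits below $p^*$. First I would write the welfare loss at $p^*$ in the form already used in the proof of Lemma~\ref{lem:lem-1},
\[
\LOSS(p^*;F_S,F_B)=\E[(B-S)\cdot\1_{S<B\leq p^*}]+\E[(B-S)\cdot\1_{p^*<S<B}],
\]
and bound each term. In the first term $B\leq p^*$, so I can drop the $-S$ and use independence to get $\E[(B-S)\1_{S<B\leq p^*}]\leq \E[B\cdot\1_{B\leq p^*}]\leq p^*\cdot\pr[B\leq p^*]\leq \tfrac15 p^*$. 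In the second term $S>p^*$, so it is at most $\E[B\cdot\1_{B>S>p^*}]\leq \E[B]\cdot\pr[S>p^*]\leq \tfrac15\,\E[B]$, using the right-continuity/definition of $p^*$ to get $\pr[S>p^*]\le \pr[S\ge p^*]\le 1/5$ (more carefully, $\pr[S\geq p^*]\le 1/5$ holds by definition of the infimum together with monotonicity, and one can pass to $\pr[S>p^*]$ which is no larger).

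Next I would relate $p^*$ and $\E[B]$ to $\OPT(F_S,F_B)$. Here the key observation is that $\OPT(F_S,F_B)=\E[\max\{B,S\}]\geq \E[B]$, which disposes of the second term: it is at most $\tfrac15\OPT$. For the first term I need $p^*\le c\cdot\OPT$ for a suitable constant. This is where the lower bound on the seller's tail just below $p^*$ enters: for every $\eps>0$ we have $\pr[S\geq p^*-\eps]>1/5$, hence $\OPT\ge \E[S]\ge (p^*-\eps)\pr[S\ge p^*-\eps]\ge \tfrac15(p^*-\eps)$, and letting $\eps\to 0$ gives $p^*\leq 5\,\OPT$. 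Plugging in, the first term is at most $\tfrac15 p^*\le \OPT$ --- which by itself is useless, so the accounting has to be done more tightly: I should not bound the two loss terms independently but rather argue that the mass causing the first loss and the mass causing the second loss are ``disjoint'' in the same product-space sense as in Lemma~\ref{lem:lem-1}. Concretely, with $\sigma:=\pr[S<p^*]\ge 4/5$ and $\beta:=\E[B\1_{B\le p^*}]/\E[B]$, the first loss term is $\le \beta\sigma\,\E[B]$ and since $\pr[B\le p^*]\le 1/5$ we also have (by Markov-type reasoning on the truncated mean, or simply) $\beta\le \pr[B\le p^*]\cdot(p^*/\E[B])$; combining $\beta\sigma\E[B]\le \pr[B\le p^*]\,p^*\le \tfrac15\cdot 5\OPT\cdot\pr[B\le p^*]$ is still too lossy. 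The clean route is instead: $\E[B\1_{B\le p^*}]\le p^*\pr[B\le p^*]$ and separately $\E[\max\{B,S\}]\ge \E[S\1_{S\ge p^*-\eps}]\ge (p^*-\eps)\cdot(1/5)$ \emph{and} $\ge \E[B]$, so that
\[
\LOSS(p^*)\le p^*\pr[B\le p^*]+\E[B]\pr[S>p^*]\le \tfrac15\cdot(5\OPT)\cdot\tfrac15+\tfrac15\OPT=\tfrac15\OPT+\tfrac15\OPT,
\]
which gives only $\W\ge \tfrac35\OPT$. To reach $\tfrac{17}{25}$ I expect one must avoid the wasteful step $p^*\le 5\OPT$ followed by $\pr[B\le p^*]\le 1/5$, and instead keep the product $p^*\pr[B\le p^*]$ intact: note $\OPT\ge\E[S]\ge p^*\cdot\pr[S\ge p^*^-]> \tfrac15 p^*$ is the only place $p^*$ appears, and in the loss we have the \emph{same} factor $p^*$ multiplied by $\pr[B\le p^*]\le 1/5$, giving first term $\le \tfrac15 p^* \le \OPT$; this is the step that must be sharpened. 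The correct bookkeeping is presumably $\LOSS(p^*)\le \tfrac15 p^*\pr[B\le p^*]/\pr[B\le p^*]\cdots$ — i.e.\ one should carry $\pr[S\ge p^*-\eps]$ rather than replacing it by $1/5$, since it could be close to $1$, making $p^*$ much smaller than $5\OPT$; but in the worst case it \emph{is} near $1/5$, and then $\pr[B\le p^*]$ near $1/5$ forces the buyer's mass above $p^*$, which makes the \emph{second} loss term small while the first is near $\tfrac15 p^*\approx \OPT\cdot\tfrac15\cdot 5$. Resolving this tension — showing the two worst cases cannot occur simultaneously and squeezing out the constant $\tfrac{17}{25}=1-\tfrac{8}{25}$ — is the main obstacle; I would handle it by a two-parameter case analysis in $(\sigma,\beta)$ exactly mirroring the $\beta+\sigma\lessgtr 1$ split in Lemma~\ref{lem:lem-1}, optimizing the constant at the end, and I expect the $8/25$ to come out of balancing $\beta\sigma$ against $(p^+-p^-)$-type terms (here $p^+=p^-=p^*$, so there is no width loss, which is why a better constant than $3/4-O(\sqrt\alpha)$ is attainable in this regime).
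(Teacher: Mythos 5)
Your proposal does not reach the stated constant: as you yourself note, your accounting bottoms out at $\W\geq\tfrac35\OPT$ and you then only speculate about a case analysis that would close the gap. The missing idea is simpler than the $(\sigma,\beta)$ case split you propose, and you in fact write down the right quantity without using it: set $r:=\E[B\cdot\1_{B\leq p^*}]/\E[B]$ (your $\beta$) and $q:=\pr[S\geq p^*]\leq 1/5$. The key observation is that $r\leq\pr[B\leq p^*]\leq 1/5$ \emph{directly}, because $\E[B\mid B\leq p^*]\leq\E[B]$ for any threshold; you instead bound $\E[B\cdot\1_{B\leq p^*}]$ by $p^*\cdot\pr[B\leq p^*]$, which forces you to relate $p^*$ to $\OPT$ via $p^*\leq 5\,\OPT$ and costs you a factor of $5$. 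With $r$ in hand, the two loss terms are bounded by
\[
\E[(B-S)\cdot\1_{S<B\leq p^*}]\leq \E[B\cdot\1_{B\leq p^*}]\cdot\pr[S<p^*]=r(1-q)\,\E[B],
\qquad
\E[(B-S)\cdot\1_{p^*<S<B}]\leq (1-r)\,q\,\E[B],
\]
using independence, and
\[
r(1-q)+(1-r)q=\tfrac12-2\bigl(\tfrac12-q\bigr)\bigl(\tfrac12-r\bigr)\leq\tfrac12-2\bigl(\tfrac{3}{10}\bigr)^2=\tfrac{8}{25},
\]
so $\LOSS(p^*)\leq\tfrac{8}{25}\E[B]\leq\tfrac{8}{25}\OPT$. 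No further case analysis is needed, and $p^*$ never has to be compared to $\OPT$; the only role of the threshold is to certify $q\leq 1/5$ and $r\leq 1/5$. Your instinct that the two loss terms cannot both be large simultaneously is correct, but the mechanism is the algebraic identity above applied with \emph{both} fractions measured against $\E[B]$, not against $p^*$.
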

\begin{proof}
Define $q := \pr[S \geq p^*] \leq 1/5$ and $r := \E[B\cdot\1_{B \leq p^*}] / \E[B]$. 
Observe that 
\[r \leq \pr[B \leq p^*] \leq \frac15.\]
As above, denote the welfare loss relative to first-best efficiency by $\LOSS(p;F_S,F_B)$:
\[ \LOSS(p;F_S,F_B) := \OPT(F_S,F_B)-\W(p;F_S,F_B)=\E[(B-S)\cdot \1_{S<B \leq p}] + \E[(B-S)\cdot \1_{p<S<B}].\]
We bound $\LOSS(p^*;F_S,F_B)$ from above as follows:
\begin{align*}
 \LOSS(p^*;F_S,F_B) 
 &\leq  \E[B\cdot \1_{B \leq p^*} \1_{S < p^*}] + \E[B\cdot \1_{B>p^*} \1_{S \geq p^*}]  \\
 &=  r \left(1-q\right) \cdot\E[B] + \left(1-r\right) q \cdot\E[B] \\
 &= \left[\frac12 - 2\left(\frac12-q\right)\left(\frac12-r\right) \right] \E[B]\\
 &\leq \left[\frac12 - 2\left(\frac12-\frac15 \right)^2 \right] \E[B] =  \frac{8}{25}\cdot \E[B]\leq\frac{8}{25}\cdot\OPT(F_S,F_B).
\end{align*}
Therefore
\[\W(p^*;F_S,F_B) =\OPT(F_S,F_B)-\LOSS(p^*;F_S,F_B)\geq \frac{17}{25}\cdot\OPT(F_S,F_B).\]
\end{proof}

\begin{prop}[Proposition \ref{prop:bd} in the main text; Theorem 4.1 in \cite{Dobzinski-Blumrosen}]
For any given distributions $F_S$ and $F_B$,
\[\sup_{p\in\R_+}\frac{\W(p;F_S,F_B)}{\OPT(F_S,F_B)}\geq 1-\frac1e+\frac1e\cdot\E[\left(S-B\right)_+].\]
\end{prop}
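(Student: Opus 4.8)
The plan is to reconstruct a clean form of the randomized posted price of \cite{Dobzinski-Blumrosen} and verify directly that its expected welfare meets the bound; since $\sup_{p}\W(p;F_S,F_B)$ is at least the expected welfare of any randomized choice of price, this suffices. Write $\W(p)$ for $\W(p;F_S,F_B)$, $\OPT$ for $\OPT(F_S,F_B)$, and $\GFT(p)$ for the corresponding gains from trade. Assume first that $F_S$ is continuous and strictly increasing on its support, so that $F_S^{-1}$ is a continuous bijection and $F_S(F_S^{-1}(q))=q$ for all $q\in(0,1)$; atoms are removed at the end by a routine approximation (using finiteness of the mean). The mechanism: draw a quantile $q$ with density $g(q)=1/q$ on $[1/e,1]$ (equivalently $q=e^{-U}$, $U$ uniform on $[0,1]$), and post the price $p=F_S^{-1}(q):=\inf\{x:F_S(x)\ge q\}$. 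This uses only $F_S$, as the main text notes. I will show
\[\E_q[\W(p)]\ \ge\ \left(1-\tfrac1e\right)\OPT+\tfrac1e\,\E[(S-B)_+],\]
where $\E_q$ averages over the quantile draw; this is the claimed inequality after dividing by $\OPT$.

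\textbf{Step 1 (reformulation).} From $\OPT=\E[S]+\E[(B-S)_+]$, $\W(p)=\E[S]+\GFT(p)$, and the pointwise identity $S=\min\{B,S\}+(S-B)_+$ (so $\E[S]=\E[\min\{B,S\}]+\E[(S-B)_+]$), the seller terms cancel exactly and the target becomes equivalent to
\[\E_q[\GFT(p)]\ \ge\ \left(1-\tfrac1e\right)\E[(B-S)_+]-\tfrac1e\,\E[\min\{B,S\}].\]
By Fubini and independence of $B,S$, $\E[(B-S)_+]=\int_0^\infty F_S(x)(1-F_B(x))\,\dd x$ and $\E[\min\{B,S\}]=\int_0^\infty(1-F_S(x))(1-F_B(x))\,\dd x$, and since $(1-\tfrac1e)F_S-\tfrac1e(1-F_S)=F_S-\tfrac1e$, the right-hand side equals $\int_0^\infty(1-F_B(x))\bigl(F_S(x)-\tfrac1e\bigr)\,\dd x$. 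So it suffices to prove $\E_q[\GFT(p)]\ge\int_0^\infty(1-F_B(x))\bigl(F_S(x)-\tfrac1e\bigr)\,\dd x$.

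\textbf{Step 2 (the computation).} Arguing as in the model section, $\GFT(p)=(1-F_B(p))\int_0^p F_S(u)\,\dd u+F_S(p)\int_p^\infty(1-F_B(x))\,\dd x$. Averaging over $q$ and applying Fubini to the second summand,
\[\E_q[\GFT(p)]=\underbrace{\E_q\!\left[(1-F_B(p))\int_0^p F_S(u)\,\dd u\right]}_{\ge 0}+\int_0^\infty(1-F_B(x))\,B(x)\,\dd x,\qquad B(x):=\E_q\!\bigl[F_S(p)\,\1\{p<x\}\bigr].\]
Now $p=F_S^{-1}(q)$ with $F_S$ continuous strictly increasing, so $F_S(p)=q$ and $\{p<x\}=\{q\le F_S(x)\}$ up to a $q$-null set; hence
\[B(x)=\E_q\bigl[q\,\1\{q\le F_S(x)\}\bigr]=\int_{1/e}^{\min\{F_S(x),1\}}q\cdot\tfrac1q\,\dd q=\bigl(F_S(x)-\tfrac1e\bigr)_+\ \ge\ F_S(x)-\tfrac1e.\]
Since the bracketed term is nonnegative and $1-F_B(x)\ge0$, we conclude $\E_q[\GFT(p)]\ge\int_0^\infty(1-F_B(x))\bigl(F_S(x)-\tfrac1e\bigr)\,\dd x$, which by Step 1 completes the continuous case. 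For general $F_S$, approximate it in mean by continuous strictly increasing distributions (smoothing out the finitely many atoms) and pass to the limit in $\OPT$, $\E[(S-B)_+]$ and $\sup_p\W(p)$; alternatively, run Step 2 verbatim using that for a general CDF $F_S^{-1}(q)<x\iff q\le F_S(x^-)$ and $F_S(F_S^{-1}(q))\ge q$, so $B(x)\ge\bigl(F_S(x^-)-\tfrac1e\bigr)_+$, and $F_S(x^-)=F_S(x)$ for all but countably many $x$.

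\textbf{Main obstacle.} Essentially all the work is in Step 1's bookkeeping and, above all, in identifying the correct quantile law $g(q)=1/q$ on $[1/e,1]$: it is pinned down by needing $B(x)\to F_S(x)-\tfrac1e$ as $F_S(x)\to1$, i.e.\ $\E_q[q]=1-\tfrac1e$, which is the only way the final pointwise inequality $B(x)\ge F_S(x)-\tfrac1e$ can be asymptotically tight at the top of the support. Once the reformulation $\E_q[\GFT(p)]=(\text{nonnegative})+\int(1-F_B)\,B$ is in place, the rest is the one-line evaluation of the $q$-integral defining $B(x)$.
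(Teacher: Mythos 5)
Your proof is correct, but it takes a genuinely different route from the paper's. The paper treats the Dobzinski--Blumrosen bound as a black box: it fixes the buyer's value $b$, truncates the seller's distribution at $b$, invokes their Theorem~4.1 for the truncated instance to get a clean $(1-1/e)$ factor, and observes that the truncation removes exactly $\pr[S>b]\cdot(\E[S\cond S>b]-b)=\E[(S-b)_+]$ from both $\W$ and $\OPT$, so that un-truncating recovers a $\frac1e\E[(S-b)_+]$ surplus; integrating over $b\sim F_B$ finishes. You instead give a self-contained verification: the same mechanism (your quantile density $g(q)=1/q$ on $[1/e,1]$ is exactly their $G^*(x)=1+\log F_S(x)$), but you cancel the $\E[S]$ terms via $S=\min\{B,S\}+(S-B)_+$, rewrite both sides as integrals of CDFs, and reduce everything to the pointwise inequality $\E_q[F_S(p)\,\1\{p<x\}]=(F_S(x)-\tfrac1e)_+\ge F_S(x)-\tfrac1e$, discarding the nonnegative seller-side term of $\GFT$. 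Your route buys independence from the cited result and makes the choice of density transparent (it is forced by $\E_q[q]=1-1/e$); the paper's route buys brevity and makes the provenance of the $\frac1e\E[(S-B)_+]$ slack more visible (it is precisely the mass lost to truncation). One small point: the proposition as literally stated adds an unnormalized $\frac1e\E[(S-B)_+]$ to a ratio; your reading (that it should be normalized by $\OPT$, i.e.\ $\E[(S-B)_+]=\alpha\cdot\OPT$) is the one the paper itself uses in the final line of its proof and in Theorem~\ref{thm:small}, so you proved the intended statement. Your handling of atoms via $F_S^{-1}(q)<x\iff q\le F_S(x^-)$ and the countability of discontinuities is adequate, modulo the usual tie-breaking conventions at atoms, which the paper is equally casual about.
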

\begin{proof}
The proof is based almost entirely on the proof of Theorem~4.1 in \cite{Dobzinski-Blumrosen}. 
Since the mechanism depends only on the seller's distribution, we can work with a fixed buyer's value and then take an expectation over the buyer at the end.
Given $F_S$, we fix $b\in\R_+$ and consider the truncated seller's distribution $\tilde F_S$ (replacing all values above $b$ by $b$):
\[\tilde F_S(x)=F_S(x)\cdot\1_{x< b}+\1_{x\geq b}.\]
We denote by $\Phi_b$ the step function $\Phi_b(x) = \1_{x \geq b}$ (corresponding to a deterministic value of $b$).
For any distribution $G$ that depends only on $F_S$, we note that
\[\E_{p\sim G}[\W(p;\tilde F_S, \Phi_b)]=\E_{p\sim G}[\W(p;F_S,\Phi_b)] - \pr[S>b]\cdot\left(\E_{S\sim F_S}[S\,|\, S > b]-b\right), \]
because the last term is exactly the expected value that is lost by modifying the seller's distribution to $\tilde{F}_S$. (Note that the trade never happens when $S > b$, so the outcome in this case is always $S$.)
By the same argument,
\[\OPT(\tilde F_S, \Phi_b)=\OPT(F_S,\Phi_b)-\pr[S>b]\cdot\left(\E_{S\sim F_S}[S\,|\, S > b]-b\right). \]
Consider the distribution $G^*(x)=1+\log F_S(x)$, where $x\in[F_S^{-1}(1/e),F_S^{-1}(1)]$. \cite{Dobzinski-Blumrosen} show that 
\[\E_{p\sim G^*}[\W(p;\tilde F_S,\Phi_b)] \geq \left(1-\frac{1}{e}\right)\cdot\OPT(\tilde F_S,\Phi_b).\]
Consequently, substitution reveals that
\begin{align*}
\E_{p\sim G^*}[\W(p;F_S,\Phi_b)]
&\geq\left(1-\frac{1}{e}\right)\cdot\OPT(F_S,\Phi_b)+\frac{1}{e}\cdot\pr[S>b]\cdot\left(\E_{S\sim F_S}[S\,|\, S\geq b]-b\right) \\
&=\left(1-\frac{1}{e}\right)\cdot\OPT(F_S,\Phi_b)+\frac{1}{e}\cdot\E\left[(S-b)_+\right].
\end{align*}
Because $G^*$ depends only on $F_S$, it follows by linearity of expectation that 
\[\E_{p\sim G^*}[\W(p;F_S,F_B)]=\E_{b\sim F_B}\E_{p\sim G^*}[\W(p;F_S,\Phi_b)]\] and \[\OPT(F_S,F_B)=\E_{b\sim F_B}[\OPT(F_S,\Phi_b)].\] Taking expectations in the above yields
\begin{align*}
\E_{p\sim G^*}[\W(p;F_S,F_B)] &\geq \left(1-\frac{1}{e}\right)\cdot\OPT(F_S,F_B)+\frac{1}{e}\cdot\E\left[(S-B)_+\right]\\
&=\left(1-\frac{1-\alpha}{e}\right)\cdot\OPT(F_S,F_B).\end{align*}
\end{proof}

\begin{thm}[Theorem \ref{thm:small} in the main text]
Given distributions $F_S$ and $F_B$, the designer can always select a price that achieves at least $1-1/e+0.0001$ of the optimal expected welfare. That is:
\[\adjustlimits\inf_{F_S,F_B\in\Delta_{L^1}(\R_+)}\sup_{p\in\R_+}\frac{\W(p;F_S,F_B)}{\OPT(F_S,F_B)}\geq 1-\frac1e+0.0001.\]
\end{thm}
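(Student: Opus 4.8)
The plan is a case analysis driven by the single quantity $\alpha:=\E[(S-B)_+]/\OPT(F_S,F_B)$ (we may assume $\OPT(F_S,F_B)>0$, since the claim is vacuous otherwise), calibrated against a small threshold $\alpha_0>0$ to be pinned down at the end. If $\alpha\ge\alpha_0$, then Proposition~\ref{prop:bd} alone finishes the job: it gives $\sup_{p}\W(p;F_S,F_B)/\OPT(F_S,F_B)\ge 1-1/e+\alpha/e\ge 1-1/e+\alpha_0/e$, so in this branch we only need $\alpha_0\ge e\cdot 10^{-4}$.

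The interesting regime is $\alpha<\alpha_0$, where $\E[(S-B)_+]\le\alpha\cdot\OPT(F_S,F_B)$ is small and Lemmas~\ref{lem:lem-1}--\ref{lem:lem-3} become available. Here I split on whether there is a \emph{balanced} price, i.e.\ a $p^*$ with $\pr[S\ge p^*]>1/5$ and $\pr[B\le p^*]>1/5$. If such a $p^*$ exists, take $p^\pm:=p^*\pm\tfrac12\sqrt{10\alpha}\,\OPT(F_S,F_B)$; Lemma~\ref{lem:lem-2} shows $\pr[S\ge p^+]\le\sqrt{10\alpha}$ and $\pr[B\le p^-]\le\sqrt{10\alpha}$, and $p^+-p^-=\sqrt{10\alpha}\,\OPT(F_S,F_B)$, so $p^\pm$ meet hypotheses (i)--(iii) of Lemma~\ref{lem:lem-1}, whence one of $p^+,p^-$ attains welfare at least $\bigl(\tfrac34-2\sqrt{10\alpha}\bigr)\OPT(F_S,F_B)\ge\bigl(\tfrac34-2\sqrt{10\alpha_0}\bigr)\OPT(F_S,F_B)$; it suffices that $\tfrac34-2\sqrt{10\alpha_0}\ge 1-1/e+10^{-4}$, i.e.\ roughly $\alpha_0\le 3.4\times10^{-4}$. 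If no balanced price exists, set $\bar p:=\inf\{p:\pr[S\ge p]\le 1/5\}$: for every $p<\bar p$ we have $\pr[S\ge p]>1/5$ and hence, by assumption, $\pr[B\le p]\le 1/5$, and letting $p\uparrow\bar p$ (with some care at atoms of $F_S$ or $F_B$ located at $\bar p$) yields $\pr[B\le\bar p]\le 1/5$, so Lemma~\ref{lem:lem-3} gives $\W(\bar p;F_S,F_B)\ge\tfrac{17}{25}\OPT(F_S,F_B)=0.68\cdot\OPT(F_S,F_B)$, which already exceeds $(1-1/e+10^{-4})\OPT(F_S,F_B)$ since $0.68>1-1/e+10^{-4}$.

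To conclude it only remains to pick $\alpha_0$ in the nonempty window between $e\cdot10^{-4}$ and roughly $3.4\times10^{-4}$ --- say $\alpha_0=3\times10^{-4}$ --- so that all three branches simultaneously guarantee expected welfare at least $(1-1/e+10^{-4})\OPT(F_S,F_B)$.

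The logical skeleton is short; the effort concentrates in two places. First, in the no-balanced-price branch one must confirm that ``no balanced price'' genuinely forces the hypothesis $\pr[B\le\bar p]\le 1/5$ of Lemma~\ref{lem:lem-3} --- the delicate point is the behaviour exactly at $\bar p$, where atoms of $F_S$ or $F_B$ can interfere with the limiting argument and may call for an infinitesimal shift of the posted price. Second, one must verify that the constants line up so that a single $\alpha_0$ serves all cases; the feasible window for $\alpha_0$ is narrow, which is precisely why only the modest improvement $10^{-4}$ is claimed --- pushing this exact argument to its limit (balancing $\alpha_0/e$ against $\tfrac34-2\sqrt{10\alpha_0}-(1-1/e)$) would raise it only to about $1.3\times10^{-4}$.
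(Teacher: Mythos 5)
Your proposal is correct and follows essentially the same route as the paper: split on $\alpha=\E[(S-B)_+]/\OPT(F_S,F_B)$ against the threshold $0.0003$, use Proposition~\ref{prop:bd} for large $\alpha$, and for small $\alpha$ use Lemma~\ref{lem:lem-2} plus Lemma~\ref{lem:lem-1} when a ``balanced'' price exists and Lemma~\ref{lem:lem-3} when it does not, with the same arithmetic window for the threshold. The only points worth noting are that the atom-at-$\bar p$ subtlety you flag is present (and glossed over similarly) in the paper's own argument, and that the degenerate case $\alpha=0$ should be routed through Proposition~\ref{prop:3/4} (or an arbitrary $\alpha'\in(0,\alpha_0)$) since Lemma~\ref{lem:lem-1} requires $p^+-p^->0$.
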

\begin{proof}
Let $\E[(S-B)_+] = \alpha\cdot \OPT$. If $\alpha \geq 0.0003$, then Proposition~\ref{prop:bd} yields 
\[\sup_{p\in\R_+}\frac{\W(p;F_S,F_B)}{\OPT(F_S,F_B)}\geq 1-\frac{1}{e}+\frac{\alpha}{e} > 1-\frac{1}{e}+0.0001.\]
If $0<\alpha < 0.0003$, then consider $p^*:=\inf\{p:\pr[S\geq p]\leq 1/5\}$. If $\pr[B\leq p^*]\leq1/5$, then Lemma~\ref{lem:lem-3} yields
\[\sup_{p\in\R_+}\frac{\W(p;F_S,F_B)}{\OPT(F_S,F_B)}\geq \frac{\W(p^*;F_S,F_B)}{\OPT(F_S,F_B)}\geq \frac{17}{25}.\]
Otherwise, if $\pr[B\leq p^*]>1/5$, then there exists some sufficiently small $\e>0$ such that $p_0=p^*-\e$ satisfies $\pr[S\geq p_0]>1/5$ and $\pr[B\leq p_0]>1/5$. Lemma~\ref{lem:lem-2} shows that this is a sufficient condition to satisfy the hypotheses of Lemma~\ref{lem:lem-1}, which implies:
\[\sup_{p\in\R_+}\frac{\W(p;F_S,F_B)}{\OPT(F_S,F_B)}\geq \frac34-2\sqrt{10\alpha}>\frac{16}{25}.\]
Finally, Proposition~\ref{prop:3/4} covers the case of $\alpha=0$.
\end{proof}

\begin{thm}[Theorem \ref{thm:quantile} in the main text]
Given distributions $F_S$ and $F_B$, for any (possibly non-deterministic) mechanism $G$ (where we let $G$ be a distribution in $[0,1]$ and we set the price $F_S^{-1}(G)$) that uses only quantile distributional knowledge of the seller's distribution, the designer can achieve no better than $1-1/e$ of the total expected welfare under the first-best efficient outcome. That is:
\[\adjustlimits\inf_{F_S,F_B\in\Delta_{L^1}(\R_+)}\sup_{G\in\Delta([0,1])}\frac{\W(G;F_S,F_B)}{\OPT(F_S,F_B)}= 1-\frac1e,\]
where $\W(G;F_S,F_B)$ denotes the expected welfare that the mechanism $G$ achieves.
\end{thm}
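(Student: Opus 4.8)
The plan is to establish a matching lower bound (hardness) to the $1-1/e$ upper bound from Proposition~\ref{prop:bd}, via Yao's principle: rather than reasoning about an arbitrary quantile-mechanism $G$, I would exhibit a single family of hard instances $(F_S, F_B)$ — actually a distribution over instances that all share the \emph{same seller distribution} $F_S$ — on which every fixed quantile $q\in[0,1]$ (equivalently every deterministic price $F_S^{-1}(q)$) achieves welfare at most $(1-1/e+o(1))\cdot\OPT$ in expectation over the instance. Since a randomized $G$ is a mixture of deterministic quantile choices, and all instances in the family have the same $F_S$ (so $G$ cannot distinguish them), linearity of expectation then forces $\E_G[\W(G;F_S,F_B)]\le (1-1/e+o(1))\cdot\OPT$ on average, hence on some instance in the family. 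Taking the $o(1)\to0$ limit along the family gives the bound.

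\medskip

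\noindent\textbf{Construction.} I would take $F_S$ to be (a discretization of) the ``equal-revenue''-type distribution that makes the Dobzinski--Blumrosen analysis tight: on a grid of $n$ points, let $S$ take value $v_k$ with the property that $F_S(v_k)=k/n$, chosen so that $v_k \sim c/(1 - k/n)$ up to the truncation, mirroring the $G^*(x)=1+\log F_S(x)$ mechanism being exactly optimal. The buyer distribution is where the randomization over instances enters: pick $j\in\{1,\dots,n\}$ and let $B$ be deterministically equal to $v_j$ (or slightly above it), with $j$ drawn from an appropriate prior $\pi$ on $\{1,\dots,n\}$. Because the price is forced to be one of the grid points $v_i = F_S^{-1}(i/n)$ and $B=v_j$ is a point mass, posting quantile $i$ yields trade iff $S\le v_i < v_j$, i.e.\ iff $i<j$ and $S\le v_i$; the welfare is then $\E[S] + \E[(B-S)\1_{S\le v_i<B}]$, which as a function of $i$ for fixed $j$ is exactly the quantity optimized in the single-buyer analysis of \cite{Dobzinski-Blumrosen}. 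I would choose $\pi$ so that, for \emph{every} fixed $i$, the average over $j\sim\pi$ of $\W(v_i; F_S,\Phi_{v_j})/\OPT(F_S,\Phi_{v_j})$ is at most $1-1/e$ in the limit — this is a one-dimensional optimization that should be set up so the worst-case buyer value from the DB tightness example is realized in expectation.

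\medskip

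\noindent\textbf{Key steps, in order.} (1) Reduce randomized $G$ to deterministic quantiles via Yao / convexity, noting all instances share $F_S$. (2) Write $\W(F_S^{-1}(i/n);F_S,\Phi_b)$ and $\OPT(F_S,\Phi_b)$ explicitly for a point-mass buyer $b=v_j$, using the integration-by-parts formulas from the Model section (restricted to the asymmetric step-function buyer as in the appendix proof of Proposition~\ref{prop:bd}). (3) Choose the grid/values of $F_S$ and the prior $\pi$ on $j$ so that $\sup_i \E_{j\sim\pi}[\W(v_i)] \le (1-1/e+o(1))\E_{j\sim\pi}[\OPT]$; verify this reduces to the known fact that $G^*$ is the optimal seller-only mechanism, run ``in reverse'' as a lower bound instance. (4) Conclude: for the mixture instance there is a single $F_S$ and a distribution over $F_B$ on which no quantile mechanism beats $1-1/e+o(1)$, so $\inf_{F_S,F_B}\sup_G \le 1-1/e$; combined with Proposition~\ref{prop:bd} (which gives $\ge 1-1/e$) this yields equality.

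\medskip

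\noindent\textbf{Main obstacle.} The delicate part is step (3): constructing the prior $\pi$ over buyer point masses so that the average performance of \emph{every} quantile is simultaneously capped at $1-1/e$. This is essentially the LP-duality certificate matching the DB mechanism — one must check that the per-instance optima $\OPT(F_S,\Phi_{v_j})$ and the per-instance welfares at a fixed quantile combine, under $\pi$, to leave no quantile with a better-than-$(1-1/e)$ ratio. I expect the right $\pi$ to be (close to) the distribution whose ``revenue curve'' is flat, so that the designer is indifferent among a continuum of quantiles, each achieving exactly $1-1/e$; making the discretization and truncation errors vanish cleanly as $n\to\infty$ is the remaining technical bookkeeping.
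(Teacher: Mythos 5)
There is a genuine gap, and it lies exactly where you placed your bets: the insistence that all instances in the hard family ``share the same seller distribution $F_S$'' so that $G$ cannot distinguish them. This cannot work, for two compounding reasons. First, both $\W(p;F_S,F_B)$ and $\OPT(F_S,F_B)$ are linear in $F_B$ for fixed $F_S$ and $p$, so your mixture over point-mass buyers $\Phi_{v_j}$ with prior $\pi$ is \emph{identical}, as far as $\E_\pi[\W]$ and $\E_\pi[\OPT]$ are concerned, to the single instance $(F_S,\bar F_B)$ with $\bar F_B=\E_{j\sim\pi}[\Phi_{v_j}]$; your Yao argument therefore reduces to exhibiting one instance on which no quantile price beats $1-1/e$. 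Second, no such instance exists: by Proposition~\ref{prop:bd}, tightness forces $\E[(S-B)_+]/\OPT\to0$, hence almost all buyer mass sits at or above $M:=\operatorname*{ess\,sup}S$; but then the designer, who knows the fixed $F_S$, compares the two quantile prices $p=M=F_S^{-1}(1)$ and $p=F_S^{-1}\((1-a)^-\)$ (the top of the support below the atom of mass $a=\pr[S=M]$). Writing $u=M/\E[B]\leq1$, the first price loses at most $(1-a)u\cdot\OPT$ (it only fails against buyers at exactly $M$, and even then retains $\E[S]\geq Ma$), while the second loses at most $a(1-u)\cdot\OPT$ (it trades with probability $1-a$ against every buyer and retains $Ma$ otherwise); since $\min\{a(1-u),(1-a)u\}\leq\frac14$, one of them achieves at least $3/4>1-1/e$. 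This is essentially Lemma~\ref{lem:lem-1} re-run with quantile prices. So a fixed-$F_S$ family is provably too easy, no matter how cleverly you tune the equal-revenue grid and the prior $\pi$ in your step (3) — the obstacle you flagged is not bookkeeping but an impossibility.

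The paper's proof does the opposite of what you propose: it \emph{fixes} the buyer (a point mass at $1$, after rescaling and truncation) and has nature randomize over \emph{seller} distributions, parametrized by $y=\pr[S<1]$, with the seller's mass split between a neighborhood of $0$ and an atom at $1$. The designer's quantile $x$ then yields payoff $V(x,y)=(1-y)+x\cdot\1_{x<y}$ against first-best value $1$, and the mixed strategy ($y=1$ with probability $1/e$, density $1/(ey^2)$ on $[1/e,1]$) caps every $x$ at $1-1/e$. The point is that a quantile-based mechanism cannot tell which quantiles map to prices below the buyer's value and which map to the atom at $1$ — that uncertainty about $F_S$ itself is the engine of the lower bound, and it is precisely what your construction gives away.
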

Before we present a formal proof, let us discuss a game-theoretic intuition behind this result. We can view the situation as a game between two players, the designer and nature. The designer tries to select a parameter $x$ to maximize efficiency, and nature tries to select a distribution $F_S$ to thwart the designer's goal. Given the choice of $x$ and $F_S$, the buyer's distribution $F_B$ is considered to be worst possible with respect to the designer's outcome. Our goal is to prove that there is a strategy of nature such that no strategy of the designer achieves an approximation factor better than $1-1/e$.

Due to von Neumann's theorem, there are optimal {\em mixed strategies} $\Xi$ and $\Phi$. It is important to keep in mind that these are randomized strategies: in the case of the designer, this means a random choice of $x$; in the case of nature, this means a random choice of $F_S$, (i.e., a probability distribution over cumulative distribution functions $F_S$, which is a more complicated object).

In order to simplify the game, let us make a few observations. Given $x$, $F_S$ and $F_B$, the expected outcome is given by taking an expectation over the buyer's value $b$ sampled from $F_B$ (because there is no dependency between $b$ and the choice of the price $p$ and the seller's value $s$). Therefore, we might as well assume that the buyer's value is deterministic, namely, the worst possible value $b$, given $x$ and $F_S$. Furthermore, for each choice of $F_S$ and $b$, the values can be rescaled so that $b=1$, without affecting the approximation ratio (i.e., the ratio of welfare relative to first-best efficiency). So we can assume without loss of generality that $b=1$.

Further, given that $b=1$, the seller's distribution can be truncated at $1$: any probability mass above $1$ can be moved to $1$.\footnote{There is a lemma making this argument in \cite{Dobzinski-Blumrosen} but since we are proving the opposite bound, this lemma is not formally needed here.}  This means that the first-best efficiency has value $\E[\max \{b,s\}] = 1$. 

Next, let us consider the strategy of nature. For a probability distribution $F_S$, if there is some mass between $(0,1)$, it only decreases the approximation ratio if we push this probability mass towards $0$ (the outcome possibly decreases, and the optimum is still $1$). It is important here that the probability mass is not concentrated on a single point -- relative comparisons between different possible values should still be non-trivial. However, we can assume for example that the probability mass below $1$ is uniform between $[0,\e]$, with density $y / \e$. Considering this, the only important parameter that governs the seller's distribution is the probability of $s$ between equal to $1$, $y = \pr[s < 1]$.

Hence, the game we are considering has pure strategies $x$ for the designer and $y$ for nature. Randomized strategies are distributions over $x$ and $y$. Given $x,y$ the payoff function for the mechanism (ignoring terms proportional to $\e$) is
$$ V(x,y) = (1-y) + x \cdot\1_{x<y}.$$
This reflects the fact that with probability $1-y$, the seller's value is $1$, in which case the outcome is certainly $1$ (since the buyer's value is also $1$).
Otherwise, the seller's value is close to $0$; then the trade occurs exactly when $s<p$ and $p<1$, and the outcome in that case is $1$; otherwise close to $0$. The event $x<y$ is equivalent to the fact that $p<1$, because $y = F_S(1)$. Given that $x<y$, the probability that $s<p$ is exactly $x$, because $p = F_S^{-1}(x)$. Therefore, $x \cdot\1_{x<y}$ is the contribution to the expected outcome in case the seller's value is below $1$.

We now derive the optimal mixed strategies. Let us assume that nature's strategy is given by a probability density function $\rho(y)$.
Then for a given (pure) designer's strategy $x$, nature's expected payoff is
\begin{equation}
\label{eq:1}
 \E[V(x,y)] = \E[(1-y) + x \cdot\1_{x<y}] = \int_0^x (1-y) \rho(y) \ \dd y + \int_x^1 (1-y+x) \rho(y)\ \dd y.\tag{\dag}
\end{equation}

We posit that for an optimal nature strategy $\rho(y)$, this quantity should be the same for every $x$ in the support of the optimal mechanism strategy. If not, then the designer's strategy could be modified to achieve a better outcome, by picking the $x$ maximizing the quantity above. We are trying to prove that the designer's strategy is defined by $g(x) = 1/x$ for $x \in [1/e,1]$. Hence, let us assume that the quantity in \eqref{eq:1} is constant for $x \in [1/e,1]$. By differentiating \eqref{eq:1} with respect to $x$, we obtain (for $x \in [1/e,1]$),
\begin{equation*}
\int_x^1 \rho(y)\ \dd y - x\, \rho(x) = 0.
\end{equation*}
Note that for $x = 1-\e$, we obtain 
\[\int_{1-\e}^{1} \rho(y)\ \dd y = (1-\e)\, \rho(1-\e).\] 
This is a somewhat paradoxical conclusion.
What this actually means is that the probability distribution cannot be fully defined by a density function; there must be a discrete probability mass at $x=1$,
which is equal to the density just below $1$.\footnote{These arguments are not  rigorous, but in any case we are just trying to guess the optimal form of nature's strategy.}
Differentiating one more time, we get
\[ -2 \rho(x) - x\, \rho'(x) = 0.\]
This differential equation is easy to solve: the solution is $\rho(y) = C / y^2$ for $y \in (1/e,1)$.
There should also be a discrete probability mass at $y=1$ equal to $C$. The normalization condition implies that $C = 1/e$. 
To complete the proof, we just have to show that there is no strategy of the designer that achieves a factor better than $1-1/e$ against this nature's strategy.

\begin{proof}
Motivated by the discussion above, we consider the following strategy of nature:
\begin{itemize}
\item With probability $1/e$, set $y = 1$.
\item With probability $1-1/e$, sample $y \in [1/e,1]$ with density $\rho(y) = 1/(e y^2)$.
\end{itemize}
Given $y$, nature's value $s$ is distributed as follows:
\begin{itemize}
\item With probability $y$, $s \in [0,\e]$ uniformly at random.
\item With probability $1-y$, $s = 1$.
\end{itemize}
We claim that for any strategy of the designer, the approximation ratio is at most $1-1/e$ against this nature's strategy.
Since mixed strategies are convex combinations of pure strategies, it is enough to consider pure strategies $x \in [0,1]$.

As we argued above, given $x$ and $y$, the expected value of the game, up to $\calO(\e)$ terms, is
$$ V(x,y) = (1-y) + x \cdot\1_{x<y}.$$
\noindent We have the following cases:
\begin{itemize}
\item $x \in [0,1/e]$. In this range, we certainly have $x < y$ (because $y$ is always at least $1/e$).
Thus, $V(x,y) = 1-y+x$. In expectation over $y$, this quantity is
\begin{align*}
\E[V(x,y)] &= \E[1-y+x] = \int_{1/e}^{1} (1-y+x)\cdot \frac{1}{ey^2}\ \dd y + \frac{1}{e} \cdot x\\
&=\left[-\frac{1+x}{ey} - \frac{1}{e} \ln y \right]_{y=1/e}^{1} + \frac{1}{e} \cdot x
  = 1 - \frac{2}{e} + x.
  \end{align*}
Since $x \leq 1/e$, we have $\E[V(x,y)] \leq 1-1/e$.

\item $x \in [1/e,1)$. In this range, we have $y \leq x$ or $y > x$ depending of the value of $y$. In the first case, the value of the game is $1-y$ and in the second case it is $1-y+x$. Thus we compute the expected value as follows:
\begin{align*}
\E[V(x,y)] &= \int_{1/e}^{1} (1-y)\cdot \frac{1}{e y^2}\ \dd y + \int_{x}^{1} x\cdot  \frac{1}{e y^2}\ \dd y + \frac{1}{e} \cdot x\\
&= \left[ -\frac{1}{ey} - \frac{1}{e} \ln y \right]_{y=1/e}^{1} + \left[ -\frac{x}{ey} \right]_{y=x}^{1} + \frac{1}{e} \cdot x\\
&=  \left(1 - \frac{2}{e} \right) + \left(\frac{1}{e} - \frac{x}{e} \right) + \frac{x}{e} = 1 - \frac{1}{e}.
\end{align*}

\item $x = 1$. Then we get the same expressions as above, except for the term $\frac{1}{e} \cdot x$, since the case of $y=1$ does not contribute anything.
Therefore, again the value is at most $1-1/e$.
\end{itemize}

We conclude that there no strategy of the designer that achieves an expected welfare of more than $1-1/e$ (neglecting $\calO(\e)$ terms).
The first-best efficiency is $1$ (since the buyer's value is always $1$) and hence the approximation factor cannot be better than $1-1/e$.
\end{proof}

\end{document}